\setlist{nolistsep} 
\DeclareMathOperator*{\argmax}{argmax}
\newcommand{\dif}{\mathop{}\!\mathrm{d}}
\renewcommand{\vec}[1]{\boldsymbol{#1}}
\newcommand{\mvec}[1]{\mathbf{#1}}
\newcommand{\der}[1]{\frac{\partial}{\partial{#1}}}
\newcommand{\lolik}{\log L(\mathcal H_T;\bm\Psi)}
\newcommand{\lolikt}{\log L(\mathcal H_T;\tilde{\bm\Psi})}
\newcommand{\lolikfull}{\log L(\mathcal H_T;\tilde{\bm\Psi},\mvec B,\mvec Z)}
\newcommand{\titledoc}{Mutually exciting point process graphs \\ for modelling dynamic networks}
\newcommand{\titleshort}{Mutually exciting point process graphs for modelling dynamic networks}
\providecommand{\keywords}[1]{{\small{\textbf{\textit{Keywords ---}} #1}}}
\newtheorem{proposition}{Proposition}
\let\oldnl\nl
\newcommand{\nonl}{\renewcommand{\nl}{\let\nl\oldnl}}
\author{Francesco Sanna Passino}
\author{Nicholas A. Heard}
\affil{Department of Mathematics, Imperial College London \\ 180 Queen's Gate, SW7 2AZ, London}
\date{}
\title{\huge\textbf{\titledoc}}
\numberwithin{equation}{section}
\begin{document}

\maketitle


\begin{abstract}
A new class of models for dynamic networks is proposed, called mutually exciting point process graphs (MEG). 
MEG is a scalable network-wide statistical model for point processes with dyadic marks, which can be used for anomaly detection when assessing the significance of future events, including previously unobserved connections between nodes. The model combines mutually exciting point processes to estimate dependencies between events and latent space models to infer relationships between the nodes. The intensity functions for each network edge are characterised exclusively by node-specific parameters, which allows information to be shared across the network. 
This construction enables estimation of intensities even for unobserved edges, which is particularly important in real world applications, such as computer networks arising in cyber-security.
A recursive form of the log-likelihood function for MEG is obtained, which 
is used to derive
fast inferential procedures 
via modern gradient ascent algorithms. 
An alternative EM algorithm is also derived. 
 The model and algorithms are tested on simulated graphs and real world 
 datasets, demonstrating excellent performance.
\end{abstract}

\keywords{dynamic network, Hawkes process, self-exciting process, statistical cyber-security.}

\section{Introduction} \label{intro}

Dynamic networks are encountered in many domains, representing, for example, interactions in social networks, messaging applications, or computer networks. Event data from dynamic networks are observed as triplets $(t_1,x_1,y_1), \ldots, (t_m,x_m,y_m)$, where $0\leq t_1\leq t_2\leq \ldots$ are event times and the dyadic marks $(x_k,y_k)$ denote the source and destination nodes, each belonging to a set of nodes $V=\{1,\dots,n\}$ of size $n$. 
The sequence of graph edges $(x_1,y_1),\dots,(x_m,y_m)$ induces a directed \textit{network adjacency matrix} $\mvec A=\{A_{ij}\}\in\{0,1\}^{n\times n}$ where $A_{ij}=1$ if node $i$ connected to node $j$ at least once during the entire observation period,
and $A_{ij}=0$ otherwise. 
This article presents a new class of models for the arrival of connection events between nodes in a network, called \textit{mutually exciting graphs} (MEG).
The MEG model builds upon \textit{mutually exciting point processes} and \textit{latent space models}.  

Mutually exciting point processes have been already successfully used for a variety of different applications: modelling of earthquakes \citep{Ogata88}, financial markets \citep{Bowsher07}, criminal activities \citep{Mohler11,Stomakhin11}, and popularity of tweets \citep{Zhao15,Chen18}. 
Let $t_1,t_2,\dots,t_m$ denote an increasing sequence of observed event times, and $N(t)=\sum_{k=1}^m \mathds 1_{[0,t]}(t_k)$ the corresponding counting process, representing the number of events observed up to time $t$. A counting process can be characterised by its \emph{conditional intensity function} $\lambda(t) = \lim_{\delta\to0}\mathbb E[N(t+\delta)-N(t)\vert\mathcal H_t]/\delta$, representing the expected rate of event times conditioned on the history $\mathcal H_t$ of the process up to time $t$. 
For self-exciting processes, the conditional intensity $\lambda(t)$ 
is assumed to depend on the last $r$ observed arrival times:
\begin{equation}
\lambda(t) = \lambda + \sum_{k>N(t)-r}^{N(t)} \omega(t-t_k), \label{mep}
\end{equation}
where $\lambda\in\mathbb R_+$ is a baseline intensity level and $\omega(\cdot)$ is a non-increasing and non-negative excitation function. For simplicity, $\omega(\cdot)$ is usually chosen to be a scaled exponential function: $\omega(t)=\beta\exp\{-(\beta+\theta)t\}$, where $\beta\geq 0$ and $\theta>0$. Usually, $\beta$ is referred to as \textit{jump} and $\beta+\theta$ as \textit{decay rate}. Alternative choices of $\omega(\cdot)$ are nonparametric step functions \citep{PriceWilliams19}, or the power-law $\omega(t)=\theta(t+\gamma)^{-1-\delta}$, where $\theta\geq 0$, $\beta,\delta>0$ and $\theta<\delta\beta^{\delta}$ \citep{Ozaki79}. In the literature, two extreme cases for the intensity in \eqref{mep} are usually considered: $r=1$, corresponding to a first order Markov-like structure, and $r=\infty$, called a Hawkes process \citep{Hawkes71}. 
Intuitively, if $r=1$, the intensity only depends on the time elapsed since the last event. 
On the other hand, if $r=\infty$, the conditional intensity depends on \emph{all} observed events, 
downweighted according to the elapsed time. 
If $r=0$, the model reduces to a simple Poisson process, such that all inter-arrival times are independent and exponentially distributed with rate $\lambda$.  

In large graphs, simultaneously modelling all the edge processes using individual intensities of the form \eqref{mep} is computationally challenging, and ignores possible correlations between different edges and nodes. Inference would require estimating $\mathcal O(n^2)$ parameters, or $\mathcal O\{\mathrm{nnz}(\mvec A)\}$ parameters if the graph is sparse, where $\mathrm{nnz}(\cdot)$ denotes the number of non-zero entries in a matrix. This is not feasible in most real-world applications. 
Furthermore, this approach would not parameterise \textit{new edges} appearing after the model training period. 
Hence, traditional statistical models for networks, for example latent space models \citep{Hoff02}, aim to reduce the representation of the network to $\mathcal O(n)$ parameters. Inspired by the literature on latent space models for network adjacency matrices, here a dynamic graph is modelled through the edge-specific point processes with intensity functions parametrised by node-specific latent features. 
In standard latent space network models, the probability of a link between two nodes is expressed as a function of node-specific latent vectors $\vec a_i,\vec b_j\in\mathbb R^d$, such that $\mathbb P(A_{ij}=1)=f(\vec a_i,\vec b_j)$, for some kernel function $f$.
In this work, it is assumed that the arrival times on each observed network edge can be modelled using a mutually exciting point process depending on node-specific characteristics. 

The related literature on mutually exciting point processes is vast, although mostly focusing on 
univariate and multivariate point processes; limited attention is devoted to using such processes for modelling large dynamic graphs. 
Hawkes processes are traditionally used to estimate causal interactions within multivariate processes \citep{Linderman14}, because of their appealing theoretical properties in terms of Granger causality and directed information~\citep{Etesami16,Eichler17}. Hawkes processes have also been used in \cite{Fox16} to analyse e-mail networks, primarily focusing on point processes on each node. \cite{Blundell12} proposed Hawkes processes to model reciprocating relationships between graph communities. \cite{Miscouridou18} extend this approach, proposing Hawkes process models for temporal interaction data with reciprocation, using compound completely random measures. 
The approach proposed in this paper is also related to \cite{Perry13}, who consider directed interactions within dynamic networks as a multivariate point process using a Cox multiplicative intensity model, with covariates depending on the history of the process. 
The MEG model proposed in this work is different from alternative methodologies proposed in the literature, since it uses mutually exciting processes at the edge level, parametrised only by node-specific features.

Furthermore, dynamic models for network snapshots observed at \textit{discrete} points in time have also been proposed in the literature. In particular, the methodology proposed in this work could be related to dynamic latent space models, which are based on latent feature representations of each node, evolving according to a temporal dynamics. Examples are \cite{Sarkar06,Krivitsky14,Sewell15,Durante16,Lee21}. The MEG model proposed in this article extends the latent feature framework to a continuous time setting, using node-specific latent vectors to parametrise point processes on each edge.

The remainder of this article is structured as follows: Section~\ref{meg} introduces the MEG model, followed by a description of the related inferential procedures in Section~\ref{sec:inference}. Section~\ref{simulation} discusses simulation from the model and the calculation of $p$-values for each network event. 
Results on simulated and real-world computer networks are discussed in Section~\ref{results}.

\section{Mutually exciting point process graphs} \label{meg}

The main contribution proposed in this article is a \textit{mutually exciting graph model} (MEG) for dynamic network point processes, defined by an $n\times n$ time-varying matrix of non-negative functions $\bm\lambda(t)=\{\lambda_{ij}(t)\}$. Each entry $\lambda_{ij}(t)$ is the conditional intensity of the counting process $N_{ij}(t)=\sum_{k=1}^m\mathds 1_{[0,t]\times\{i\}\times\{j\}}(t_k,x_k,y_k)$ of events occurring on the graph edge $(i,j)$, such that $\lambda_{ij}(t)=\lim_{\delta\to0}\mathbb E[N_{ij}(t+\delta)-N_{ij}(t)\vert\mathcal H_t]/\delta$. 
For generality, it is assumed that for each edge $(i,j)$ there exists a changepoint $\tau_{ij}\geq0$ after which the edge becomes observable. In the simplest case, $\tau_{ij}=0$ for all $i$ and $j$.

To parameterise $\bm\lambda(t)$, each entry is represented as an additive model with three non-negative components. The first, denoted $\alpha_i(t)$, characterises the process of arrival times involving $i$ as source node; the second, $\beta_j(t)$, corresponds to arrivals for which $j$ is the destination node; the third, $\gamma_{ij}(t)$, is an interaction term which will also be parameterised by node-specific parameters, giving: 
\begin{equation}
\lambda_{ij}(t) = 
\alpha_i(t)+\beta_j(t)+\gamma_{ij}(t), \quad t\geq\tau_{ij}. \label{megg}
\end{equation}
Note that the intensity function \eqref{megg} resembles the link function used in additive and multiplicative effect network models for network adjacency matrices, proposed in \cite{Hoff21}. 

Define the source and destination counting processes as $N_i(t)=\sum_{k=1}^m\mathds 1_{[0,t]\times\{i\}}(t_k,x_k)$ and $N_j^\prime(t)=\sum_{k=1}^m\mathds 1_{[0,t]\times\{j\}}(t_k,y_k)$. Furthermore, let $\ell_{i1},\ell_{i2},\dots$ denote the indices $\{k:x_k=i\}$ of the arrival times such that $i$ appears as source node, and $\ell_{j1}^\prime,\ell_{j2}^\prime,\dots$ denote the event indices $\{k:y_k=j\}$ for which $j$ is the destination node. To allow self excitation of both source and destination nodes, the latent functions $\alpha_i(t)$ and $\beta_j(t)$ are assigned a similar form to the conditional intensity \eqref{mep}: 
\begin{align}
\alpha_i(t)=\alpha_i+ \sum_{k>N_i(t)-r}^{N_i(t)} \omega_i(t-t_{\ell_{ik}}), \quad 
\beta_j(t)=\beta_j + \sum_{k>N^\prime_j(t)-r}^{N^\prime_j(t)}  \omega_j^\prime(t-t_{\ell_{jk}^\prime}), \label{node_process}
\end{align}
where $\bm\alpha=(\alpha_1,\ldots,\alpha_n),\bm\beta=(\beta_1,\ldots,\beta_n)\in\mathbb R_+^n$ are node-specific baseline intensity levels, and $\omega_i,\omega_i^\prime$ 
are node-specific, non-increasing excitation functions from $\mathbb R_+$ to $\mathbb R_+$. For simplicity, the excitation functions 
assume the following scaled exponential form, for non-negative parameters $\bm\mu_i,\bm\mu_j^\prime,\bm\phi_i,\bm\phi_j^\prime\in\mathbb R_+^n$:
\begin{align}
\omega_i(t) = \mu_i\exp\{-(\mu_i+\phi_i)t\}, \quad \omega_j^\prime(t) = \mu_j^\prime\exp\{-(\mu_j^\prime+\phi_j^\prime) t\}. \label{scaled_expo}
\end{align}
Scaled exponential excitation functions have significant computational advantages for inference in MEG models: in particular, the log-likelihood can be expressed in recursive form and evaluated in linear time, which speeds up inference and allows the methodology to scale to large graphs. These aspects will be more extensively discussed in Section~\ref{sec:rec_lik}.

Similarly, 
let $\ell_{ij1},\ell_{ij2},\dots$ be the indices $\{k:x_k=i,y_k=j\}$ of the events observed on the edge $(i,j)$. The interaction term $\gamma_{ij}(t)$ in \eqref{megg} assumes a similar form to \eqref{node_process}, but with a background rate obtained as the inner product between two node-specific $d$-dimensional baseline parameter vectors $\bm\gamma_i,\bm\gamma^\prime_j \in \mathbb{R}^d_+,\ d\in\mathbb N$:
\begin{equation}
  \gamma_{ij}(t)= \bm\gamma_i^\intercal \bm\gamma^\prime_j 
  + \sum_{k>N_{ij}(t)-r}^{N_{ij}(t)} \omega_{ij}(t-t_{\ell_{ijk}}). \label{edge_process}
\end{equation}
The excitation function $\omega_{ij}(t)$ is also expressed as a sum of scaled exponential functions, parameterised by four node-specific, non-negative latent $d$-vectors $\bm\nu_i,\bm\nu^\prime_j,\bm\theta_i,\bm\theta^\prime_j\in\mathbb{R}_+^d$:
\begin{equation}
\omega_{ij}(t) = \sum_{\ell=1}^d \nu_{i\ell}\nu_{j\ell}^\prime\exp\{-(\theta_{i\ell}+\nu_{i\ell})(\theta_{j\ell}^\prime+\nu_{j\ell}^\prime) t\}.\label{eq:omega_ij}
\end{equation}
The inner product baseline and products within the scaled exponential excitation functions are inspired by random dot product graph models \citep[see, for example,][]{Athreya18} for link probabilities. This choice is helpful to obtain closed form expression for inference in MEG models, as discussed in Section~\ref{sec:em}. Alternative options, inspired by other latent space models, could also be used for the baseline, such as $\|\bm\gamma_i-\bm\gamma_j^\prime\|_2$ \citep{Hoff02}.

A cartoon example of the intensity $\lambda_{ij}(t)$ for the $d=1$ dimensional MEG model with scaled exponential functions is given in Figure~\ref{simulated_process}, with $\alpha_i=0.2,\ \mu_i=0.5,\ \phi_i=0.5,\ \beta_j=0.1,\ \mu_j^\prime=0.8,\ \phi_j^\prime=0.2,\ \gamma_i=0.8,\ \nu_i=0.9,\ \theta_i=1.1,\ \gamma_j^\prime=0.6,\ \nu_j^\prime=0.3,\ \theta_j^\prime=0.2$. In Figure~\ref{sim_all}, the edge intensity function jumps at each event time involving source node $i$ or destination node $j$, or both. In particular, larger jumps in $\lambda_{ij}(t)$, of size $\mu_i+\mu_j^\prime+\nu_i\nu_j^\prime$ for $r=\infty$, are observed when events are observed on the edge $(i,j)$ (triangles). The intensity also increases if events are observed from source node $i$ (circles, \textit{cf.} Figure~\ref{sim_alpha}) or to destination node $j$ (squares, \textit{cf.} Figure~\ref{sim_beta}), with jumps of size $\mu_i$ and $\mu_j^\prime$ respectively for $r=\infty$. For $r=1$, the intensity $\lambda_{ij}(t)$ is bounded by construction at $\alpha_i+\beta_j+\gamma_i\gamma_j^\prime+\mu_i+\mu_j^\prime+\nu_i\nu_j^\prime$.


\begin{figure}[t]
\centering
\begin{subfigure}[t]{.495\textwidth}
\centering
\caption{$\alpha_i(t)$}
\includegraphics[width=0.975\textwidth]{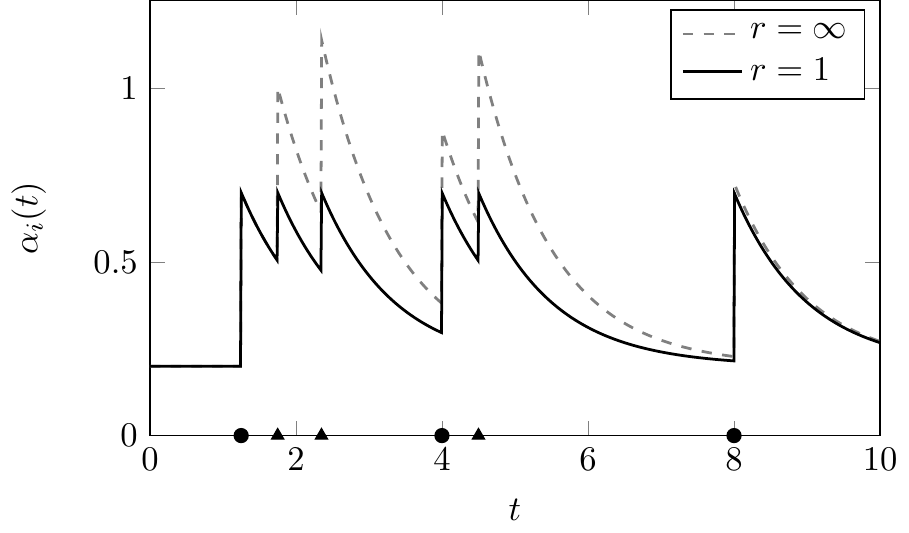}
\label{sim_alpha}
\end{subfigure}
\begin{subfigure}[t]{0.495\textwidth}
\centering
\caption{$\beta_i(t)$}
\includegraphics[width=0.975\textwidth]{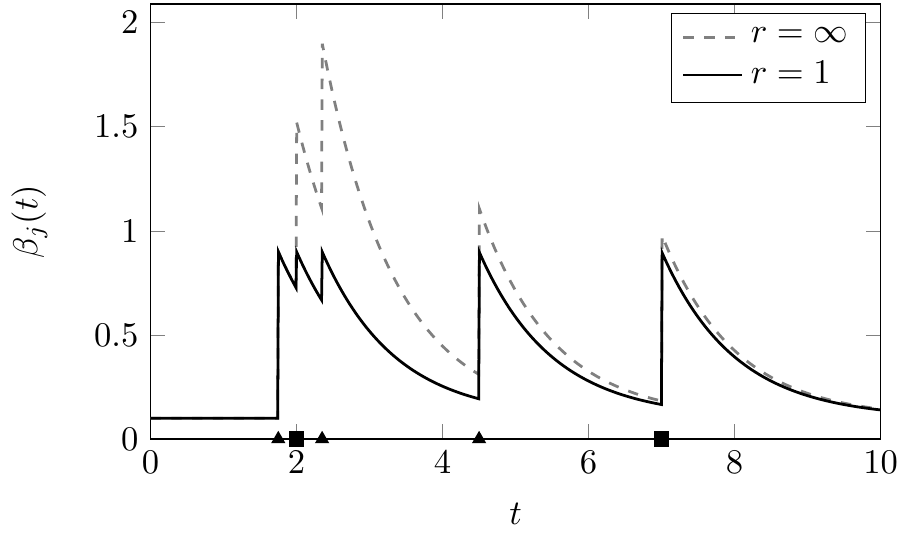}
\label{sim_beta}
\end{subfigure}
\begin{subfigure}[t]{0.495\textwidth}
\centering
\caption{$\gamma_{ij}(t)$}
\includegraphics[width=0.975\textwidth]{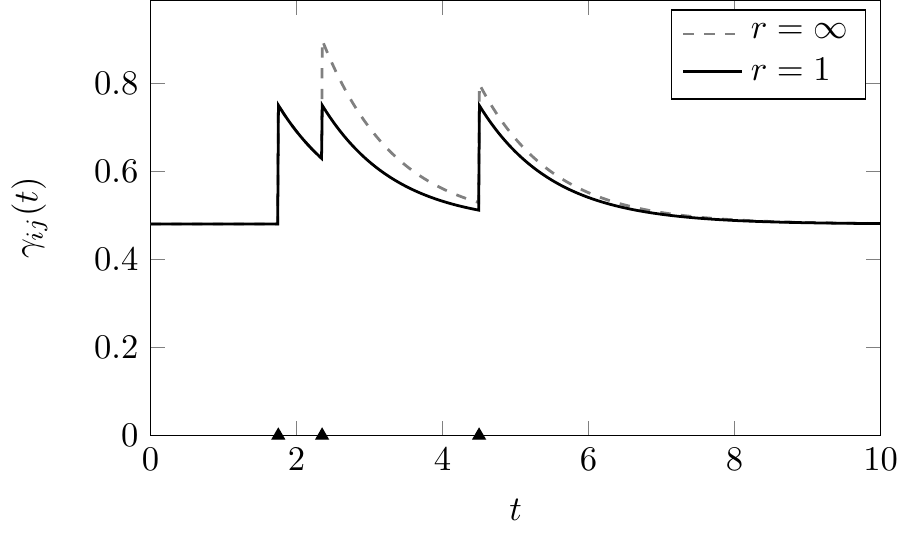}
\label{sim_gamma}
\end{subfigure}
\begin{subfigure}[t]{0.495\textwidth}
\centering
\caption{$\lambda_{ij}(t)$}
\includegraphics[width=0.975\textwidth]{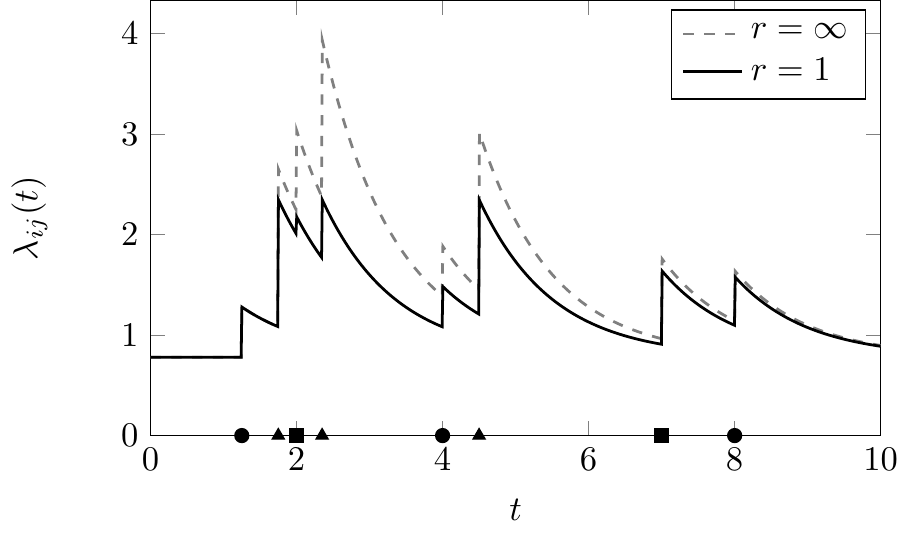}
\label{sim_all}
\end{subfigure}
\caption{Cartoon of a 1-dimensional MEG model \eqref{megg}--\eqref{eq:omega_ij} for $r=1$ and $r=\infty$. Event times are marked on the $x$-axis. Events with source node $i$ and destination node $j$ are denoted by triangles ($\blacktriangle$); other events with source node $i$ are are denoted with circles ($\bullet$), and other events with destination node $j$ are denoted by squares ($\blacksquare$). 
For each event time, a jump in the corresponding intensities is observed.  
}
\label{simulated_process}
\end{figure}

A key feature of the model is the representation of the intensity \eqref{megg} with only node-specific parameters $\bm\Psi=(\bm\alpha,\bm\mu,\bm\phi,\bm\beta,\bm\mu^\prime,\bm\phi^\prime,\bm\gamma,\bm\nu,\bm\theta,\bm\gamma^\prime,\bm\nu^\prime,\bm\theta^\prime)$. This construction allows estimation of intensities even for unobserved edges. Therefore, in practical applications, when a new link is observed, it is possible to immediately provide an estimate of the intensity of the process on that edge. This is a substantial difference with respect to models based on edge-specific parameters for the edge intensities. Such models do not perform well for scoring new links, since the score for a new observation could only be based on a prior guess of the intensity, whereas MEG ``borrows strength'' from events observed on similar nodes and edges in the graph, providing an informed estimate of the intensity function on the newly observed edge. 

For a sequence of observed events $\mathcal H_T=\{(x_1,y_1,t_1),\dots,(x_m,y_m,t_m)\}$, with event times in $[0,T]$, the log-likelihood \citep{Daley02} of a generic MEG model is:
\begin{equation}
\lolik = \sum_{i=1}^n\sum_{j=1}^n\left\{\sum_{k=1}^{n_{ij}}\log
\lambda_{ij}(t_{\ell_{ijk}})-\int_{\tau_{ij}}^T \lambda_{ij}(t)\mathrm \dif t\right\}. \label{loglik}
\end{equation}
where $n_{ij}$ is the number of events observed on edge $(i,j)$. 
Explicit forms of the likelihood function \eqref{loglik} 
with $d=1$ and $r=1$ or $r=\infty$, are 
presented in the Appendix. 

\subsection{Computational issues with the calculation of the likelihood} \label{sec:rec_lik}

For $r=\infty$, the main computational burden associated with the calculation of the log-likelihood \eqref{loglik} is the double summation over each $(i,j)$ pair of the sum of intensities $\lambda_{ij}(t_{\ell_{ijk}})$ for the events $t_{\ell_{ijk}}$, and the summations required in \eqref{node_process} and \eqref{edge_process} to evaluate that intensity for each event.
This section discusses a recursive form of the log-likelihood \eqref{loglik} for the MEG model with $r=\infty$, which can be evaluated in linear time on each active edge, significantly reducing the computational requirements. This is a significant computational advantage of scaled exponential excitation functions, which makes them particularly appealing for practical applications.
Assume sequences of arrival times $t_{i1}<\dots<t_{iN_i(T)}$ involving $i$ as source node, and $t_{j1}^\prime<\dots<t_{jN_j^\prime(T)}^\prime$ such that $j$ is the destination of the connection. Within each pair of sequences, assume that a subset of $n_{ij}\leq\min\{N_i(T),N_j^\prime(T)\}$ events is observed on the edge $(i,j)$, and denote the indices of such events as $u_{ij,1},\dots,u_{ij,n_{ij}}$ and $u_{ij,1}^\prime,\dots,u_{ij,n_{ij}}^\prime$. 
The terms in the first summation in the log-likelihood \eqref{loglik} can then be written as:
\begin{align}
\log\lambda_{ij}(t_{\ell_{ijk}})  = 
\log\bigg\{ & \alpha_i + \mu_i\sum_{h=1}^{u_{ij,k}-1} e^{-(\mu_i+\phi_i)(t_{\ell_{ijk}}-t_{ih})} + \beta_j + \mu_j^\prime\sum_{h=1}^{u_{ij,k}^\prime-1} e^{-(\mu_j^\prime+\phi_j^\prime)(t_{\ell_{ijk}}-t_{jh}^\prime)} \notag \\ 
& + \bm\gamma_i^\intercal\bm\gamma_j^\prime +\sum_{q=1}^d\nu_{iq}\nu_{jq}^\prime\sum_{h=1}^{k-1} e^{-(\nu_{iq}+\theta_{iq})(\theta_{jq}^\prime+\nu_{jq}^\prime)(t_{\ell_{ijk}}-t_{\ell_{ijh}})}\bigg\}. \label{mega_lik}
\end{align}
Using a technique similar to the method proposed in \cite{Ogata78}, it is possible to calculate \eqref{mega_lik} in linear time using a recursive formulation of the inner summations. For $k\in\{1,2,\dots,n_{ij}\}$, define $\psi_{ij}(k)$, $\psi_{ij}^\prime(k)$ and $\tilde\psi_{ijq}(k)$ as follows:
\begin{gather}
\psi_{ij}(k) = \sum_{h=1}^{u_{ij,k}-1} e^{-(\mu_i+\phi_i)(t_{\ell_{ijk}}-t_{ih})}, 
\hspace{2cm} \psi_{ij}^\prime(k) = \sum_{h=1}^{u_{ij,k}^\prime-1} e^{-(\mu_j^\prime+\phi_j^\prime)(t_{\ell_{ijk}}-t_{jh}^\prime)}, \\ 
\tilde\psi_{ijq}(k)=\sum_{h=1}^{k-1} e^{-(\nu_{iq}+\theta_{iq})(\nu_{jq}^\prime+\theta_{jq}^\prime)(t_{\ell_{ijk}}-t_{\ell_{ijh}})},\ q=1,\dots,d. \label{psi_def}
\end{gather}
Using \eqref{mega_lik} and \eqref{psi_def}, the first term of the log-likelihood \eqref{loglik} becomes:
\begin{equation}
\sum_{k=1}^{n_{ij}} \log\lambda_{ij}(t_{\ell_k}) = \sum_{k=1}^{n_{ij}} \log\left\{\alpha_i+\beta_j+\gamma_i\gamma_j^\prime + \mu_i\psi_{ij}(k) + \mu_j^\prime\psi_{ij}^\prime(k) + \sum_{q=1}^d\nu_{iq}\nu_{jq}^\prime\tilde\psi_{ijq}(k) \right\}. \label{meg_lik2}
\end{equation}
The expression can be evaluated in linear time using the recursive equations for $\psi_{ij}(k), \psi_{ij}^\prime(k)$ and $\tilde\psi_{ijq}(k)$ presented in the following proposition, proved in the Appendix.  

\begin{proposition}
The terms $\psi_{ij}(k), \psi_{ij}^\prime(k)$ and $\tilde\psi_{ij}(k)$ can be written recursively as follows:
\begin{align}
& \psi_{ij}(k) = e^{-(\mu_i+\phi_i)(t_{\ell_{ijk}}-t_{\ell_{ij,k-1}})}\left[1+\psi_{ij}(k-1)\right] + \sum_{h=u_{ij,k-1}+1}^{u_{ij,k}-1} e^{-(\mu_i+\phi_i)(t_{\ell_{ijk}}-t_{ih})}, \label{psi}\\
& \psi_{ij}^\prime(k) = e^{-(\mu_j^\prime+\phi_j^\prime)(t^\prime_{\ell_k^\prime}-t^\prime_{\ell_{k-1}^\prime})}\left[1+\psi_{ij}^\prime(k-1)\right] + \sum_{h=u_{ij,k-1}^\prime+1}^{u_{ij,k}^\prime-1} e^{-(\mu_j^\prime+\phi_j^\prime)(t_{\ell_{ijk}}-t_{jh}^\prime)},\\ 
& \tilde\psi_{ijq}(k) = e^{-(\nu_{iq}+\theta_{iq})(\nu_{jq}^\prime+\theta_{jq}^\prime)(t_{\ell_{ijk}}-t_{\ell_{ij,k-1}})}\left[1+\tilde\psi_{ijq}(k-1)\right].
\label{psi_prime}
\end{align}
\end{proposition}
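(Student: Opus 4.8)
The plan is to exploit the semigroup (memoryless) property of the exponential kernel: for any rate $a\geq 0$ and any intermediate time, $e^{-a(t_{\ell_{ijk}}-t_{ih})}=e^{-a(t_{\ell_{ijk}}-t_{\ell_{ij,k-1}})}e^{-a(t_{\ell_{ij,k-1}}-t_{ih})}$. This identity is what lets the common elapsed-time factor between the $(k-1)$-th and $k$-th edge events be pulled outside the inner summation, converting each $\psi$ evaluated at index $k$ into a rescaled copy of the same quantity at index $k-1$.

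First I would treat $\psi_{ij}(k)$ with $a=\mu_i+\phi_i$. Its defining summation runs over all source-$i$ events with index $h\in\{1,\dots,u_{ij,k}-1\}$, i.e. over every source-$i$ arrival strictly preceding the $k$-th edge event. I would split this range into three blocks: $h\in\{1,\dots,u_{ij,k-1}-1\}$, the single index $h=u_{ij,k-1}$, and $h\in\{u_{ij,k-1}+1,\dots,u_{ij,k}-1\}$. Applying the factorisation above to the first block and recognising the resulting inner sum as $\psi_{ij}(k-1)$ yields the term $e^{-(\mu_i+\phi_i)(t_{\ell_{ijk}}-t_{\ell_{ij,k-1}})}\psi_{ij}(k-1)$. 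The isolated middle term uses the crucial index identity $t_{i,u_{ij,k-1}}=t_{\ell_{ij,k-1}}$ (the $(k-1)$-th edge event is by definition the $u_{ij,k-1}$-th source-$i$ event), so its exponent collapses to $e^{-(\mu_i+\phi_i)(t_{\ell_{ijk}}-t_{\ell_{ij,k-1}})}$, supplying the $+1$ inside the bracket of \eqref{psi}. The third block is left untouched and becomes the residual sum over source-$i$ arrivals falling strictly between the two consecutive edge events. Collecting the three contributions gives \eqref{psi} exactly.

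The derivation for $\psi_{ij}^\prime(k)$ is identical after replacing the source-node sequence and rate $\mu_i+\phi_i$ by the destination-node sequence $t_{jh}^\prime$ and rate $\mu_j^\prime+\phi_j^\prime$; no new ideas are required. For $\tilde\psi_{ijq}(k)$ the argument is strictly simpler: here the summation index $h$ ranges over the edge events themselves, $h\in\{1,\dots,k-1\}$, so consecutive indices differ by exactly one and the ``in-between'' block is empty. The same factorisation with $a=(\nu_{iq}+\theta_{iq})(\nu_{jq}^\prime+\theta_{jq}^\prime)$ peels off the block $h\in\{1,\dots,k-2\}$ as $e^{-a(t_{\ell_{ijk}}-t_{\ell_{ij,k-1}})}\tilde\psi_{ijq}(k-1)$, while the single remaining term $h=k-1$ contributes the $+1$, producing \eqref{psi_prime} with no residual sum.

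The computation is essentially routine once the exponential factorisation is in hand; the only point demanding care — and the one I expect to be the main source of error — is the index bookkeeping linking the three distinct enumerations in play: $t_{\ell_{ijk}}$ (the $k$-th edge event), $t_{ih}$ (the $h$-th source-$i$ event), and the pointer $u_{ij,k}$ relating them via $t_{i,u_{ij,k}}=t_{\ell_{ijk}}$. Verifying this identification at the index $u_{ij,k-1}$ is exactly what turns the otherwise awkward boundary term into the clean $+1$, so I would state it explicitly before carrying out the split.
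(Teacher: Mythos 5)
Your proof is correct and takes essentially the same approach as the paper's: your exponential semigroup factorisation is precisely the paper's device of multiplying and dividing by $e^{(\mu_i+\phi_i)(t_{\ell_{ijk}}-t_{\ell_{ij,k-1}})}$ to shift the reference time, and your three-block split with the isolated boundary term at $h=u_{ij,k-1}$ (where $t_{i,u_{ij,k-1}}=t_{\ell_{ij,k-1}}$ yields the $+1$) is just a reorganisation of the paper's split at $\ell_{k-1}$ followed by extraction of that same term. Your observations that $\psi_{ij}^\prime(k)$ is handled identically on the destination sequence and that $\tilde\psi_{ijq}(k)$ has no residual sum because the edge-event indices are consecutive also match the paper's treatment.
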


\subsection{Extension to undirected and bipartite graphs}

The proposed modelling framework has been presented for directed graphs, but could be easily extended to undirected and bipartite networks. For undirected graphs $\mvec A=\mvec A^\intercal$,
hence there is no distinction between source and destination nodes. Therefore, $\beta_j(t)$ in \eqref{megg} could be simply be replaced by $\alpha_j(t)$, and $\gamma_{ij}(t)$ modified as follows:
\begin{equation}
\gamma_{ij}(t) = \bm\gamma_i^\intercal \bm\gamma_j + \sum_{k>N_{ij}(t)-r}^{N_{ij}(t)}\sum_{\ell=1}^d \nu_{i\ell}\nu_{j\ell}\exp\{-(\theta_{i\ell}+\nu_{i\ell})(\theta_{j\ell}+\nu_{j\ell}) t\}.
\end{equation}
Furthermore, bipartite graphs can be considered as special cases of directed graphs, where the node set $V=V_1\cup V_2$ is divided into two sets $V_1$ and $V_2$ of cardinality $n_1$ and $n_2$, such that $V_1\cap V_2=\varnothing$, and all the edges are of the form $(i,j)$ with $i\in V_1$ and $j\in V_2$. Therefore, the intensity function \eqref{megg} and the corresponding components \eqref{node_process} and \eqref{edge_process} 
still hold. 

\section{Inference via maximum likelihood estimation} \label{sec:inference}

Inference in Hawkes processes is usually carried out using maximum likelihood estimation (MLE) via the EM algorithm or gradient ascent methods, since it is not possible to optimise the likelihood analytically. Similar issues arise for the log-likelihood \eqref{loglik} for MEG models.
Only a small subset of the parameters has a closed-form solution for the MLE: the start times $\tau_{ij}$. 
If $\tau_{ij}$ in \eqref{loglik} is unknown, the maximum likelihood estimates is simply $\hat\tau_{ij}=t_{\ell_{ij1}}$ if at least one event is observed on the edge, and $\hat\tau_{ij}=\infty$ otherwise. 
Intuitively, this is reasonable: the best guess about the start time of activity on an edge simply corresponds to the first observation on that edge. 
A formal proof of the result is provided in the Appendix. 
For maximising \eqref{loglik} with respect to the remaining parameters $\bm\Psi$, two strategies are deployed: 
the Expectation-Maximisation algorithm \citep[EM,][]{Dempster77}, and 
the adaptive moment estimation method \citep[\textit{Adam,}][]{Kingma15}. 

Note that issues with MLE might arise when the parameters lie at the boundaries of the parameter space. For example, for a non-negative finite jump $0<\mu_i<\infty$, an infinitely fast decaying rate $\phi_i\to\infty$ would make the resulting process 
a simple Poisson process for $\alpha_i(t)$ in \eqref{meg}. Similarly, if $0<\phi_i<\infty$, a jump size $\mu_i\to0$ would make $\alpha_i(t)$ again correspond to a Poisson process with rate $\alpha_i$. Similar considerations can be made about the parameters of the excitation functions of the remaining components in \eqref{meg}.
Additionally, identifiability issues are observed if further constraints are not imposed on the parameters: for example, subtracting a constant $c\in[0,\min\{\alpha_1,\dots,\alpha_n\})$ for all $\alpha_i$, and adding the \textit{same} constant to all $\beta_j$, returns the same log-likelihood function \eqref{loglik}. For identifiability, at least one value of $\alpha_i$ or $\beta_j$ must be kept fixed.
Identifiability issues also arise from the interaction term: the inner product $\bm\gamma_i^\intercal\bm\gamma_j^\prime$ is invariant to orthogonal transformations of $\bm\gamma_i$ and $\bm\gamma_j^\prime$ preserving non-negativity of the vectors.
Similarly, for a constant $c\in\mathbb R_+$, the interaction parameters $\bm\nu_i, \bm\theta_i, \bm\nu_j^\prime$ and $\bm\theta_j^\prime$ produce the same excitation function as $c\bm\nu_i, c\bm\theta_i, \bm\nu_j^\prime/c$ and $\bm\theta_j^\prime/c$. This leads to highly multimodal log-likelihood functions, and to a non-unique MLE.
This problem is inconsequential for prediction, since the predictive distribution of new events depends upon a function of the parameters which is identifiable. Similarly, for assessing robustness of parameter estimation procedures, identifiable transformations of the parameters can be considered: for example, the sums $\alpha_i+\beta_j$ in the main effects model are identifiable, or the products $(\nu_i+\theta_i)(\nu_j^\prime + \theta_j^\prime)$ for $d=1$. An example will be given in Section~\ref{sim_full}.

\subsection{Inference via the EM algorithm} \label{sec:em}

An EM algorithm can be conveniently implemented following a network-wide extension of the procedure of \cite{Fox16}, after adopting a simple reparametrisation of the log-likelihood \eqref{loglik}. In particular, the scaled exponential decay rates $\mu_i+\phi_i, \mu_j^\prime+\phi_j^\prime,  \nu_{iq}+\theta_{iq}$ and $\nu_{jq}^\prime+\theta_{jq}^\prime$ are rewritten as $\tilde\phi_i, \tilde\phi_j^\prime, \tilde\theta_{iq}$ and $\tilde\theta_{jq}^\prime$, where $\tilde\phi_i>\mu_i, \tilde\phi_j^\prime>\mu_j^\prime, \tilde\theta_{iq}>\nu_i$, and $\tilde\theta_{jq}^\prime>\nu_j^\prime$. 
Similarly, the jumps $\mu_i, \mu_j^\prime, \nu_{iq}$ and $\nu_{jq}^\prime$ are expressed as the product between the decay rates $\tilde\phi_i, \tilde\phi_j^\prime, \tilde\theta_{iq}$ and $\tilde\theta_{jq}^\prime$ and the ratios 
between the jump and decay rates, denoted:
\begin{align}
\tilde\mu_i = \frac{\mu_i}{\mu_i+\phi_i}, && \tilde\mu_j^\prime = \frac{\mu_j^\prime}{\mu_j^\prime+\phi_j^\prime},&& \tilde\nu_{iq}=\frac{\nu_{iq}}{\nu_{iq}+\theta_{iq}},&& \tilde\nu_{jq}^\prime=\frac{\nu_{jq}^\prime}{\nu_{jq}^\prime+\theta_{jq}^\prime},
\end{align}
where such parameters lie in $[0,1]$.
For example, under the two equivalent parametrisations, $\omega_i(t) = \mu_i\exp\{-(\mu_i+\phi_i)t\} = \tilde\mu_i\tilde\phi_i\exp\{-\tilde\phi_it\}$.
The vector of all parameters can then be equivalently rewritten as $\tilde{\bm\Psi}=(\bm\alpha,\tilde{\bm\mu},\tilde{\bm\phi},\bm\beta,\tilde{\bm\mu}^\prime,\tilde{\bm\phi}^\prime,\bm\gamma,\tilde{\bm\nu},\tilde{\bm\theta},\bm\gamma^\prime,\tilde{\bm\nu}^\prime,\tilde{\bm\theta}^\prime)$, using the updated notation. 
Furthermore, consider the sequence of arrival times $t_{i1}<\cdots<t_{iN_i(T)}$ involving $i$ as source node, and $t_{j1}^\prime<\cdots<t_{jN_j^\prime(T)}^\prime$ such that $j$ is the destination of the connection. Similarly, let the sequence $t_{ij1}<\cdots<t_{ij{N_{ij}(T)}}$ denote the events on the edge $(i,j)$.
Using this revised notation, the conditional intensity function \eqref{megg} for an edge, for $t\geq \tau_{ij}$, is:
\begin{equation}
\lambda_{ij}(t)=\alpha_i+ \sum_{k>N_i(t)-r}^{N_i(t)} \omega_i(t-t_{ik}) + \beta_j + \sum_{k>N^\prime_j(t)-r}^{N^\prime_j(t)}  \omega_j^\prime(t-t_{jk}^\prime) + \sum_{q=1}^d \gamma_{iq}\gamma^\prime_{jq} 
  + \sum_{k>N_{ij}(t)-r}^{N_{ij}(t)}\sum_{q=1}^d \omega_{ijq}(t-t_{ijk}), \label{cif}
\end{equation}
where the excitation function $\omega_{ij}(\cdot)$ in \eqref{edge_process} has been expressed as a sum of $d$ functions $\omega_{ijq}:\mathbb R_+\to\mathbb R_+$, where $\omega_{ijq}(t)=\tilde\nu_{iq}\tilde\theta_{iq}\tilde\nu_{jq}^\prime\tilde\theta_{jq}^\prime\exp\{-\tilde\theta_{iq}\tilde\theta_{jq}^\prime t\}$ from \eqref{eq:omega_ij}.
Therefore, conditional on $t$, the subsequent event on the edge $(i,j)$ could be interpreted as the offspring of one of the $2+d+\min\{r,N_i(t)\}+\min\{r,N_j^\prime(t)\}+d\min\{r,N_{ij}(t)\}$  components of the intensity \eqref{cif}, each corresponding to a non-homogeneous Poisson process in $(t,\infty)$. In other words, $\lambda_{ij}(t)$ is written as a superimposition of conditional intensities of different processes, where the event allocations are missing data, 
giving a \emph{branching structure} to the event hierarchy.

For missing data problems, the traditional approach in statistics is to deploy the EM algorithm, 
which in this setting requires to introduce latent binary variables to reconstruct the branching structure. 
For events generated from the background rates $\alpha_i$, $\beta_j$ and $\gamma_{iq}\gamma^\prime_{jq},\ q=1,\dots,d$ (also known as \emph{immigrant events} in the literature), the corresponding latent variables are denoted by the letter $b$. In particular $b_{ij\ell}^{(\alpha)}\in\{0,1\}$ equals 1 if $t_{ij\ell}$ is a background event obtained from the Poisson process with rate $\alpha_i$, and 0 otherwise. Similarly, $b_{ij\ell}^{(\beta)}$ and $b_{ij\ell q}^{(\gamma)}$ denote whether the event $t_{ij\ell}$ is a background event from Poisson processes with rates $\beta_j$ and $\gamma_{iq}\gamma^\prime_{jq}$ respectively.
On the other hand, for the events that are not generated from the background rates, the corresponding latent variables are denoted with the letter $z$. In particular,
$z_{ij\ell k}^{(\alpha)}=1$ if $t_{ij\ell}$ is offspring of the $k$-th event such that node $i$ is source, and $0$ otherwise; a similar reasoning applies for $z_{ij\ell k}^{(\beta)}$, which instead considers the sequence of events such that node $j$ is destination. As before, it is necessary to introduce a further subscript for the interaction term: $z_{ij\ell kq}^{(\gamma)}=1$ if $t_{ij\ell}$ is offspring of the $k$-th event on the edge $(i,j)$, from the $q$-th additive component of the intensity.
If such latent variables are known, it is possible to write in simple form the \emph{complete data} log-likelihood, which also includes the information about the branching structure:
\begin{multline}
\lolikfull = \sum_{i=1}^n \sum_{j=1}^n \Bigg\{\sum_{\ell=1}^{n_{ij}} \Bigg[b_{ij\ell}^{(\alpha)}\log(\alpha_i) + \sum_{k>N_i(t_{ij\ell})-r}^{N_i(t_{ij\ell})} z_{ij\ell k}^{(\alpha)}[\log(\tilde\mu_i\tilde\phi_i)-\tilde\phi_i(t_{ij\ell}-t_{ik})] \\
+ b_{ij\ell}^{(\beta)}\log(\beta_i) + \sum_{k>N_j^\prime(t_{ij\ell})-r}^{N_j^\prime(t_{ij\ell})} z_{ij\ell k}^{(\beta)}[\log(\tilde\mu_j^\prime\tilde\phi_j^\prime)-\tilde\phi_j^\prime(t_{ij\ell}-t_{jk}^\prime)]
+\sum_{q=1}^d \Bigg(b_{ij\ell q}^{(\gamma)}[\log(\gamma_{iq})+\log(\gamma_{jq}^\prime)] \\ 
+ \sum_{k>N_{ij}(t_{ij\ell})-r}^{N_{ij}(t_{ij\ell})} z_{ij\ell kq}^{(\gamma)}[\log(\tilde\nu_{iq}\tilde\theta_{iq})+\log(\tilde\nu_{jq}^\prime\tilde\theta_{jq}^\prime)-\tilde\theta_{iq}\tilde\theta_{jq}^\prime(t_{ij\ell}-t_{ijk})]\Bigg)
\Bigg] -\int_{\tau_{ij}}^T \lambda_{ij}(t)\mathrm \dif t \Bigg\}. \label{lolik_full}
\end{multline}
The E-step of the EM algorithm consists in calculating $\mathbb E_{\mvec B,\mvec Z\vert\mathcal H_T,\bm\Psi^\ast}\{\lolikfull\}$, the expected value of the complete data log-likelihood \eqref{lolik_full} with respect to the distribution of the latent indicators $\mvec B$ and $\mvec Z$,  conditional on the observations $\mathcal H_T$ and parameter values $\tilde{\bm\Psi}^\ast$. From \eqref{lolik_full}, this reduces to calculating:
\begin{align}
\xi_\cdot^{(\cdot)} = 
\mathbb P_{\mvec B,\mvec Z\vert\mathcal H_T,\tilde{\bm\Psi}^\ast}\left\{b_\cdot^{(\cdot)}=1\ \big\vert\ \tilde{\bm\Psi}^\ast\right\}, 
& & 
\zeta_\cdot^{(\cdot)} = 
\mathbb P_{\mvec B,\mvec Z\vert\mathcal H_T,\tilde{\bm\Psi}^\ast}\left\{z_\cdot^{(\cdot)}=1\ \big\vert\ \tilde{\bm\Psi}^\ast\right\},
\end{align}
known as \textit{responsibilities}. Such 
probabilities are simply represented by the relative contributions of different components to the conditional intensity \eqref{cif}:
\begin{align}
& \xi_{ij\ell}^{(\alpha)} \propto {\alpha_i}, & & \zeta_{ij\ell k}^{(\alpha)} \propto {\tilde\mu_i\tilde\phi_i\exp\{-\tilde\phi_i(t_{ij\ell}-t_{ik})\}} \mathds 1_{(t_{ik},\infty)}(t_{ij\ell}), \\
& \xi_{ij\ell}^{(\beta)} \propto {\beta_j}, & & \zeta_{ij\ell k}^{(\beta)} \propto {\tilde\mu_j^\prime\tilde\phi_j^\prime\exp\{-\tilde\phi_j^\prime(t_{ij\ell}-t_{jk}^\prime)\}} \mathds 1_{(t_{jk}^\prime,\infty)}(t_{ij\ell}), \\
& \xi_{ij\ell q}^{(\gamma)} \propto {\gamma_{iq}\gamma_{jq}^\prime}, & & \zeta_{ij\ell kq}^{(\gamma)} \propto {\tilde\nu_{iq}\tilde\theta_{iq}\tilde\nu_{jq}^\prime\tilde\theta_{jq}^\prime\exp\{-\tilde\theta_{iq}\tilde\theta_{jq}^\prime(t_{ij\ell}-t_{ijk})\}} \mathds 1_{(t_{ijk},\infty)}(t_{ij\ell}),  \label{respo}
\end{align}
with normalising constant $\lambda_{ij}(t_{ij\ell})$, \textit{cf.} \eqref{meg} and \eqref{cif}, calculated using parameter values $\tilde{\bm\Psi}^\ast$.

At the M-step, the expectation $\mathbb E_{\mvec B,\mvec Z\vert\mathcal H_T,\tilde{\bm\Psi}^\ast}\{\lolikfull\}$ calculated at the E-step is maximised with respect to $\tilde{\bm\Psi}$, and updated parameter estimates are obtained. For most of the parameters in the MEG model with scaled exponential excitation function, the maxima are analytically available, and their form depends on the choice of $r$. For $r=\infty$:
\begingroup
\allowdisplaybreaks
\begin{align}
& \hat\alpha_i = \frac{\sum_{j=1}^n \sum_{\ell=1}^{n_{ij}} \xi_{ij\ell}^{(\alpha)}}{\sum_{j=1}^n (T-\min\{T,\tau_{ij}\})}, \notag
&& \hat{\tilde\mu}_i = \frac{\sum_{j=1}^n\sum_{\ell=1}^{n_{ij}}\sum_{k=1}^{N_i(t_{ij\ell})} \zeta_{ij\ell k}^{(\alpha)}}{\sum_{j=1}^n\sum_{k=1}^{n_i} [e^{-\tilde\phi_i\min\{T,\max\{\tau_{ij}-t_{ik},0\}\}} - e^{-\tilde\phi_i(T-t_{ik})}]}, \\
& \hat\beta_j = \frac{\sum_{i=1}^n \sum_{\ell=1}^{n_{ij}} \xi_{ij\ell}^{(\beta)}}{\sum_{i=1}^n (T-\min\{T,\tau_{ij}\})}, 
&& \hat{\tilde\mu}_j^\prime = \frac{\sum_{i=1}^n\sum_{\ell=1}^{n_{ij}}\sum_{k=1}^{N_j^\prime(t_{ij\ell})} \zeta_{ij\ell k}^{(\beta)}}{\sum_{i=1}^n\sum_{k=1}^{n_j^\prime} [e^{-\tilde\phi_j^\prime\min\{T,\max\{\tau_{ij}-t_{jk}^\prime,0\}\}} - e^{-\tilde\phi_j^\prime(T-t_{jk}^\prime)}]}, \notag \\
& \hat\gamma_{iq} = \frac{\sum_{j=1}^n \sum_{\ell=1}^{n_{ij}} \xi_{ij\ell q}^{(\gamma)}}{\sum_{j=1}^n \gamma_{jq}^\prime(T-\min\{T,\tau_{ij}\})},
&& \hat{\tilde\nu}_{iq} = \frac{\sum_{j=1}^n\sum_{\ell=1}^{n_{ij}}\sum_{k=1}^{N_{ij}(t_{ij\ell})} \zeta_{ij\ell kq}^{(\gamma)}}{\sum_{j=1}^n \tilde\nu_{jq}^\prime\sum_{k=1}^{n_{ij}} [1
- e^{-\tilde\theta_{iq}\tilde\theta_{jq}^\prime(T-t_{ijk})}]}, \label{est1} 
\end{align}
and similarly for $\hat\gamma_{jq}^\prime$ and $\hat{\tilde\nu}_{jq}^\prime$.
For the remaining parameters, an exact solution is not available, but recursive equations can be obtained. For example, again for $r=\infty$: 
\begin{gather}
\tilde\phi_i = \frac{\sum_{j=1}^n\sum_{\ell=1}^{n_{ij}}\sum_{k=1}^{N_i(t_{ij\ell})} \zeta_{ij\ell k}^{(\alpha)}}{\sum_{j=1}^n\{\sum_{\ell=1}^{n_{ij}}\sum_{k=1}^{N_i(t_{ij\ell})} \zeta_{ij\ell k}^{(\alpha)}(t_{ij\ell} - t_{ik}) + \tilde\mu_i\sum_{k=1}^{n_i} [(T-t_{ik})e^{-\tilde\phi_i(T-t_{ik})} - \tau_{ijk}^+e^{-\tilde\phi_i\tau_{ijk}^+}]\}}, \\
\tilde\theta_{iq} = \frac{\sum_{j=1}^n\sum_{\ell=1}^{n_{ij}}\sum_{k=1}^{N(t_{ij\ell})} \zeta_{ij\ell kq}^{(\gamma)}}{\sum_{j=1}^n\sum_{\ell=1}^{n_{ij}}\{ \sum_{k=1}^{N_{ij}(t_{ij\ell})} \zeta_{ij\ell kq}^{(\gamma)}\tilde\theta^\prime_{jq}(t_{ij\ell} - t_{ijk}) + \tilde\nu_{iq}\tilde\nu_{jq}^\prime\tilde\theta_{jq}^\prime(T-t_{ij\ell})e^{-\tilde\theta_{iq}\tilde\theta_{jq}^\prime(T-t_{ij\ell})}\}}, \label{est2}
\end{gather}
where $\tau_{ijk}^+=\min\{T,\max\{\tau_{ij}-t_{ik},0\}\}$. Similar equations are available for $\tilde\phi_j^\prime$ and $\tilde\theta_{jq}^\prime$. The full iterative procedure is summarised in Algorithm~\ref{em}.
\endgroup

\begingroup

\begin{algorithm}[!h]
\normalsize
\SetAlgoLined
\SetKw{Until}{until}
\SetKwInput{Input}{Input}
\SetKwInput{Output}{Output}
\Input{initial parameter values $\tilde{\bm\Psi}_0$.}
\Output{model parameters $\tilde{\bm\Psi}$ corresponding to a local maximum of $\lolikt$.}
\For{$k=1,2,\dots$}{
E-step: calculate responsibilities $\xi_\cdot^{(\cdot)}$ and $\zeta_\cdot^{(\cdot)}$ using \eqref{respo} with parameters $\tilde{\bm\Psi}_k$, \\
M-step: calculate $\tilde{\bm\Psi}_{k+1} = \argmax_{\tilde{\bm\Psi}} \mathbb E_{\mvec B, \mvec Z \vert \mathcal H_T, \tilde{\bm\Psi}_k}\{\lolikfull\}$; 
for $r=\infty$, apply \eqref{est1} and \eqref{est2} iteratively, using the \textit{most recent} parameter estimates,
 }
 \Until \emph{convergence in $\lolikt$.}
 \caption{EM 
 algorithm for optimisation of the log-likelihood \eqref{loglik}.}
 \label{em}
\end{algorithm}
\endgroup

\subsection{Inference via gradient ascent methods} \label{sec:adam}

The EM algorithm proposed in the previous section has appealing statistical properties, but it is not scalable for large networks or for large numbers of events, since it requires t$n_{ij}[2 + d + N_i(T) + N_j^\prime(T) + dn_{ij}]$ additional latent variables to be defined for each edge, which is not feasible in most practical applications. 
On the other hand, the log-likelihood in \eqref{loglik} was shown to have a recursive expression for $r=\infty$, which also holds for its gradient with respect to the parameters $\bm\Psi$. Therefore, in order to make the inferential procedure scalable, gradient-based optimisation methods appear to be suitable. 
Gradient ascent methods are usually based on computing the gradient of the log-likelihood function, and iteratively updating the parameter values in the direction of steepest ascent given by the gradient, for a given step size $\eta\in\mathbb R_+$, also known as \textit{learning rate}. One of the main issues of standard gradient ascent for high-dimensional parameter estimation is the choice of the learning rate.
The adaptive moment estimation method \citep[\textit{Adam,}][]{Kingma15} is a popular gradient ascent optimisation algorithm widely used in the machine learning and deep learning communities, which adaptively selects and adjusts learning rates for each parameter. 
Its convergence properties have been extensively studied \citep{Reddi18,Chen19,Zou19}.
In \textit{Adam}, the step sizes are adjusted via exponentially weighted moving averages (EWMA) of the estimated gradient and square gradient (respectively denoted $\vec m$ and $\vec v$, with decay rates $\rho_1,\rho_2\in[0,1]$). Such averages provide estimates for the first and second moment of the gradient respectively; these estimates are then corrected for bias and used to update the parameters in a similar fashion to standard gradient ascent, after adding a small offset $\varepsilon\in\mathbb R_+$ (usually known as smoothing parameter) to the estimate of the second moment, in order to avoid computational issues when its value vanishes towards zero. 
Considering the high-dimensional maximum likelihood estimation of MEG models, Adam appears to be a suitable inferential choice. Alternative gradient ascent techniques for optimisation are surveyed in \cite{Ruder16}. 
In this work, Adam is implemented after adopting a simple re-parametrisation and optimising the logarithm of each parameter, since are all constrained to be positive.
The resulting optimisation procedure is detailed in Algorithm~\ref{adam}. 
The gradient $\vec g = \der{\bm\Psi} \lolik$ of the likelihood \eqref{loglik} 
with respect to $\bm\Psi$ inherits a recursive form from \eqref{meg_lik2}, and therefore it can be calculated in linear time. Explicit forms 
for $d=1$ and $r=\infty$ are derived in the Appendix.

\begingroup

\begin{algorithm}[!h]
\normalsize
\SetAlgoLined
\SetKw{Until}{until}
\SetKwInput{Input}{Input}
\SetKwInput{Output}{Output}
\Input{step size $\eta\in\mathbb R_+$, decay rates $\rho_1,\rho_2\in(0,1)$, smoothing parameter $\varepsilon\in\mathbb R_+$, initial parameter values $\bm\Psi_0$.}
\Output{model parameters $\bm\Psi$ corresponding to a local maximum of $\lolik$.}
Initialise estimates of the first and second moment of the gradient: 
$\vec m_0=\vec 0, \vec v_0=\vec 0,$ \\
\For{$k=1,2,\dots$}{
calculate gradient $\vec g_k = \left.\der{\bm\Psi} \lolik\right|_{\bm\Psi=\bm\Psi_{k-1}}$, evaluated at $\bm\Psi_{k-1}$, \\ 
update EWMA estimate of first moment: $\vec m_k = \rho_1\vec m_{k-1} + (1-\rho_1)(\vec g_k\times\bm\Psi_{k-1})$, \\
update second moment: $\vec v_k = \rho_2\vec v_{k-1} + (1-\rho_2)[(\vec g_k\times\bm\Psi_{k-1})\times(\vec g_k\times\bm\Psi_{k-1})]$, \\ 
update parameters: $\bm\Psi_k = \bm\Psi_{k-1}\times\exp\left\{\eta \vec m_t \big/ (1-\rho_1^k)\left(\sqrt{\vec v_t/ (1-\rho_2^k)} + \varepsilon\right)\right\}$, 
 }
 \Until \emph{convergence in $\lolik$}. \\
\nonl\emph{Sums, products, quotients, exponentials, and square roots are applied element-wise.} 
 \caption{\textit{Adam} 
 algorithm for optimisation of the log-likelihood \eqref{loglik}.}
 \label{adam}
\end{algorithm}
\endgroup

\section{Simulation and assessment of the goodness-of-fit} \label{simulation}

In order to validate the inferential procedure, it is necessary to simulate data from the MEG model \eqref{megg}, which can be interpreted as an extended 
multivariate Hawkes process where some of the parameters are shared across the individual processes. 
Therefore, 
simulating MEG models is possible under the framework described in \cite{Ogata81}, and follows the standard technique 
of simulation via \textit{thinning}. 
The procedure is described in Algorithm~\ref{algo_sim}. 

\begingroup

\begin{algorithm}[!h]
\normalsize
\SetAlgoLined
set $t^\star=0$,\\
 \Repeat{$t^\star>T$}{
  set $\lambda^\star = \sum_{i=1}^n\sum_{j=1}^n
  \lambda_{ij}(t^\star_+)$, where $t^\star_+$ denotes the limit from the right, \\
  generate the inter-arrival time $q=-\log(u)/\lambda^\star$, where $u\sim\mathrm{Uniform}[0,1]$, \\ 
  obtain the candidate arrival time $t^\star\leftarrow t^\star+q$, \\
  assign the arrival time $t^\star$ to the edge $(i,j)$ with probability $
  \lambda_{ij}(t^\star_-)/\lambda^\star$, and do not assign to any edge with probability $1-\sum_{i=1}^n\sum_{j=1}^n
  \lambda_{ij}(t^\star_-)/\lambda^\star$, where $t^\star_-$ denotes the limit from the left. \\
 }
 \caption{Simulation of a MEG in $[0,T]$.}
 \label{algo_sim}
\end{algorithm}
\endgroup

Furthermore, it is possible to assess the performance of the inferential procedure by evaluating the goodness-of-fit from out-of-sample events. If the model parameters are estimated only from the event times obtained in $[0,T^\star]$, with $T^\star<T$, using Algorithm~\ref{adam}, the goodness-of-fit can then be evaluated from the event times in $(T^\star,T]$. Goodness-of-fit measures can be calculated from functions of the compensator function for the model.
Given the conditional intensity $\lambda_{ij}(t)$, the compensator $\Lambda_{ij}(t)$  is:
\begin{equation}
\Lambda_{ij}(t) = \int_{\tau_{ij}}^t \lambda_{ij}(s)\mathrm ds.
\end{equation}
Examples of compensator functions for some MEG models, for $t=T$, can be found in the Appendix. 
Given arrival times $t_1,\dots,t_{n_{ij}}$ on the edge $(i,j)$, under the null hypothesis of correct specification of the conditional intensity $\lambda_{ij}(t)$, by time rescaling theorem \citep[see, for example,][]{Brown02} $\Lambda_{ij}(t_1),\dots, \Lambda_{ij}(t_{n_{ij}})$ are event times of a homogeneous Poisson process with unit rate. It follows that 
the upper tail $p$-values
\begin{equation}
p_{ijk} = \exp\{-\Lambda_{ij}(t_k)+\Lambda_{ij}(t_{k-1})\} = \exp\left\{-\int_{t_{k-1}}^{t_k} \lambda_{ij}(s) \dif s\right\} \label{pval}
\end{equation}
follow a standard uniform distribution under the null hypothesis. Therefore, given the estimates of the conditional intensity functions obtained from the arrival times in $[0,T^\star]$, approximately uniform $p$-values for the test event times in $(T^\star,T]$ should be observed if the model is specified and estimated correctly.

\section{Applications and results} \label{results}

In this section, the MEG model is tested on simulated network data and on two real world computer network datasets: the Enron e-mail network, and a bipartite graph obtained from network flow data collected at Imperial College London. Across the experiments, the decay rates $(\rho_1,\rho_2)$ in Algorithm~\ref{adam} have been set to $(0.9,0.99)$, and $\varepsilon=10^{-8}$.

\subsection{Simulated events on small fully connected graphs} \label{sim_full}

\begin{figure}[!t]
\centering
\begin{subfigure}[t]{.495\textwidth}
\centering
\caption{Baseline}
\includegraphics[width=0.975\textwidth]{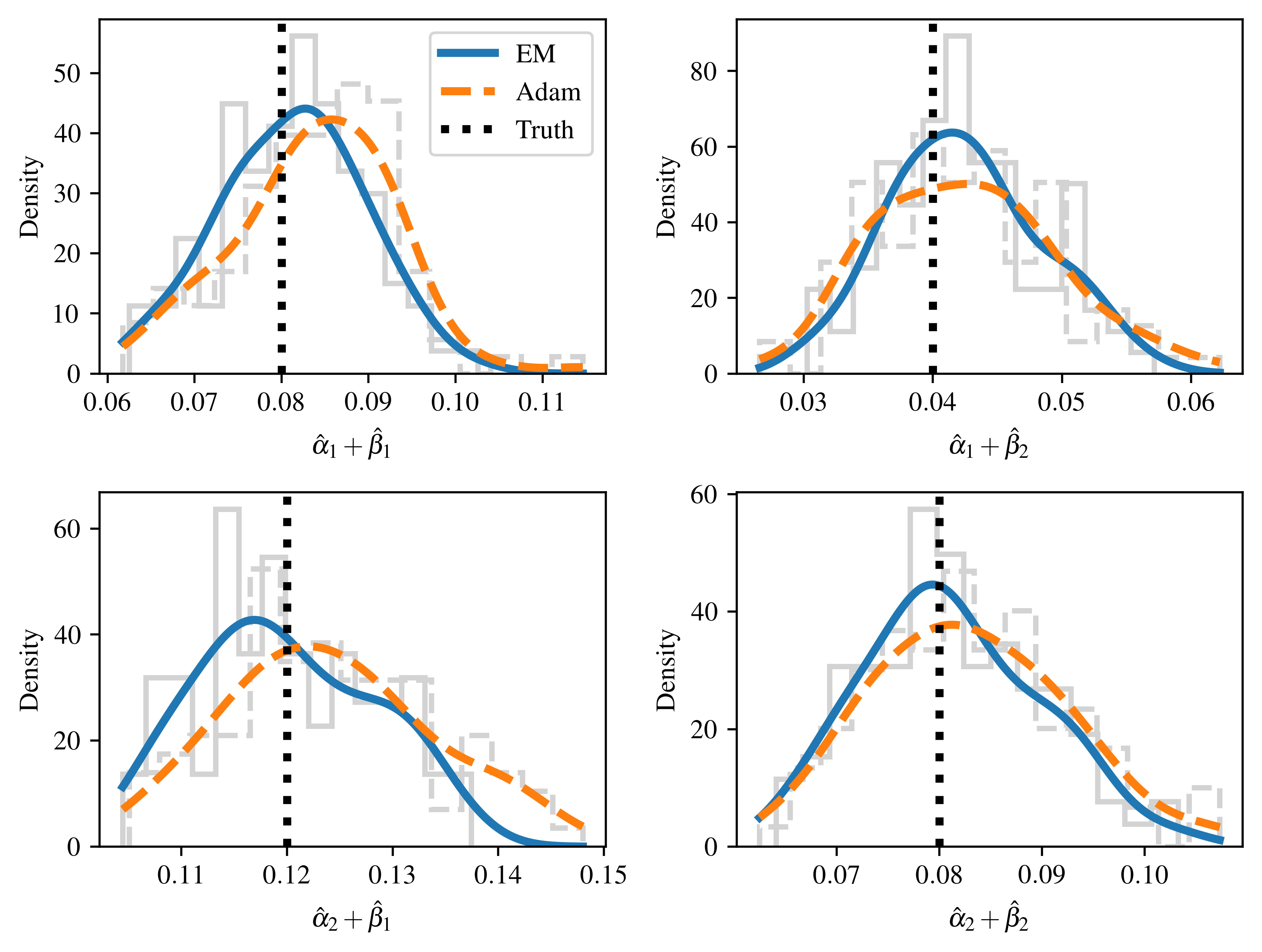}
\label{main_alpha_beta}
\end{subfigure}
\begin{subfigure}[t]{0.495\textwidth}
\centering
\caption{Jump}
\includegraphics[width=0.975\textwidth]{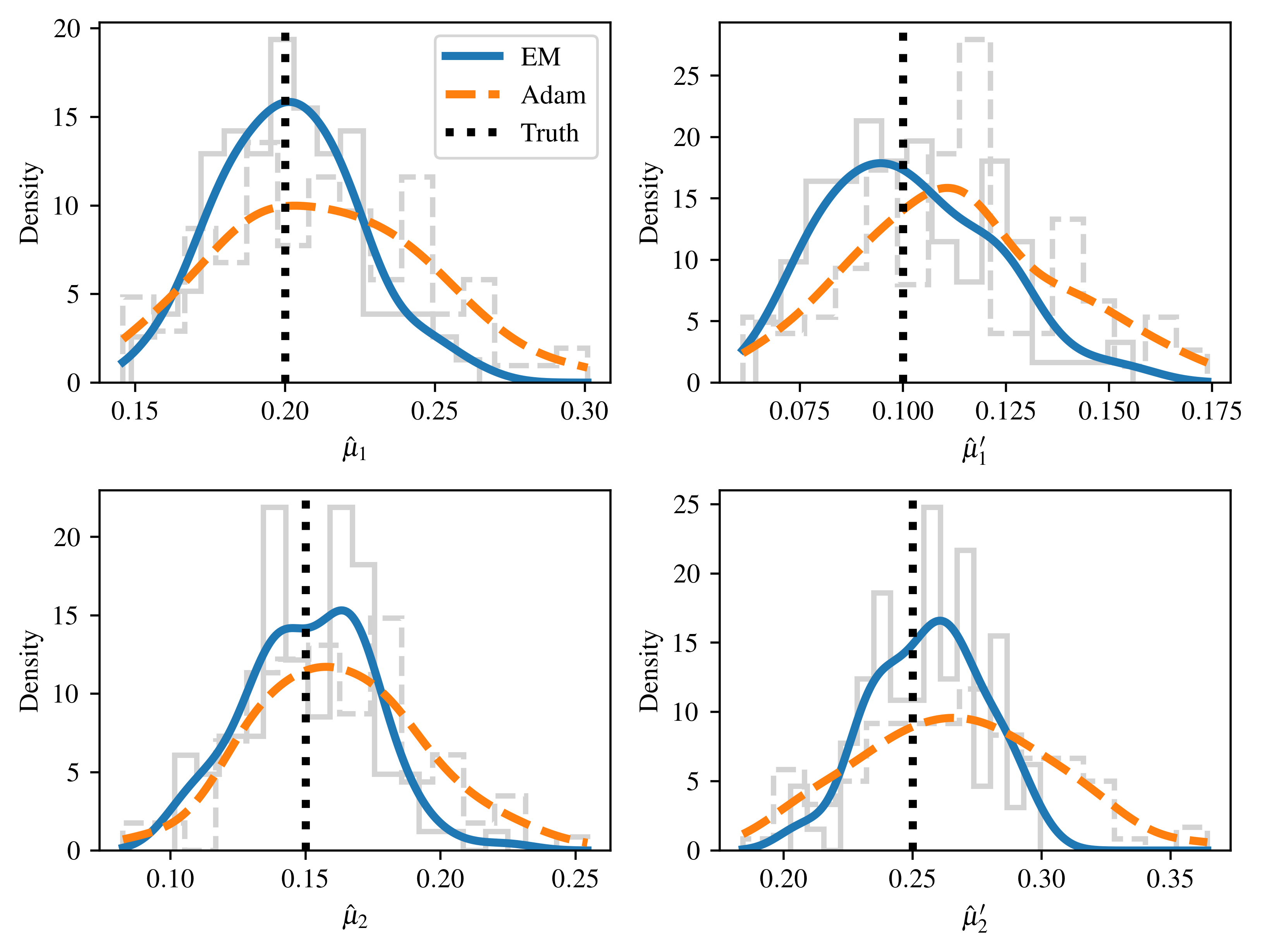}
\label{main_jump}
\end{subfigure}
\begin{subfigure}[t]{0.495\textwidth}
\centering
\caption{Decay}
\includegraphics[width=0.975\textwidth]{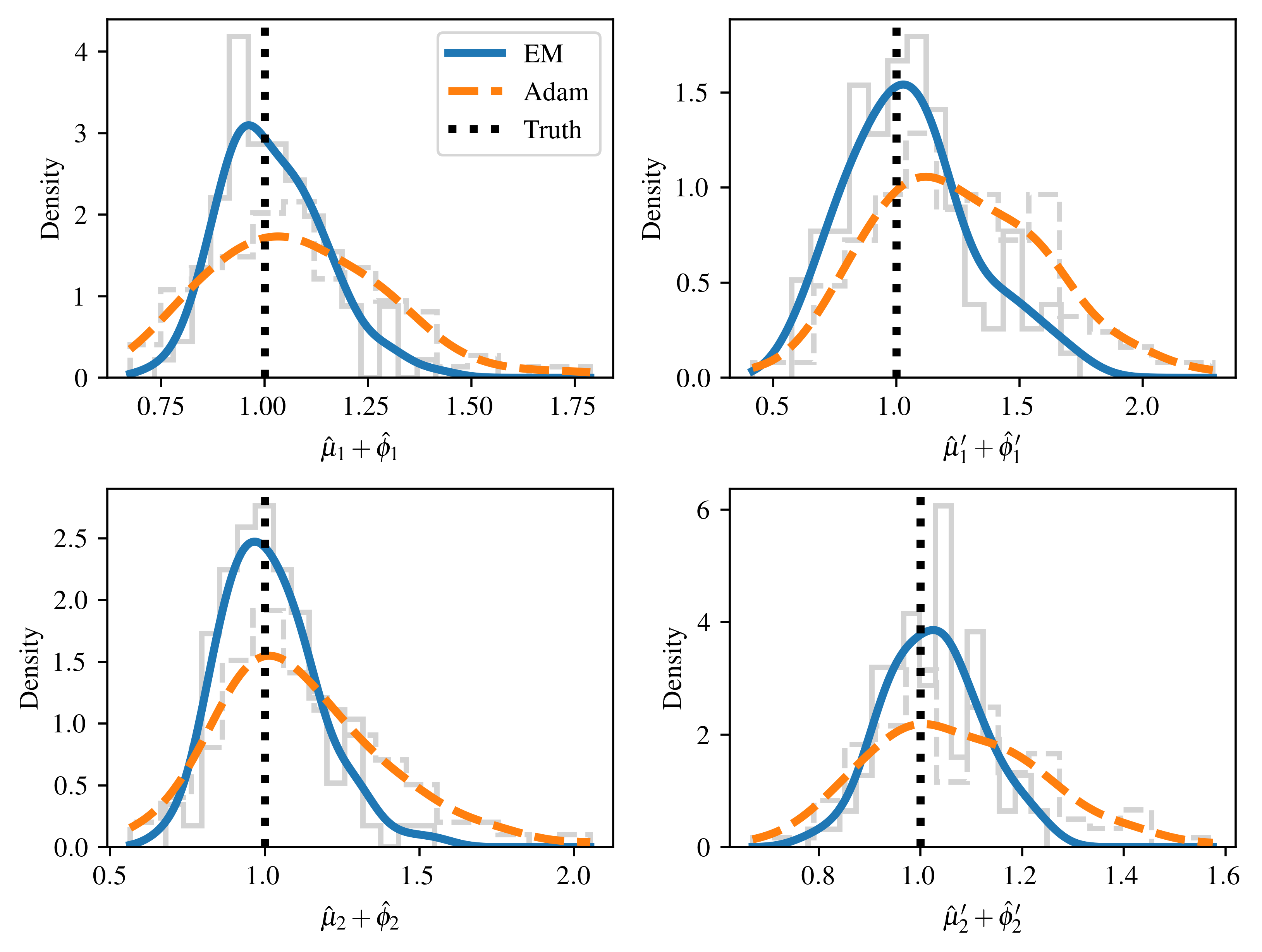}
\label{main_decay}
\end{subfigure}
\begin{subfigure}[t]{0.495\textwidth}
\centering
\caption{KS scores}
\includegraphics[width=0.975\textwidth]{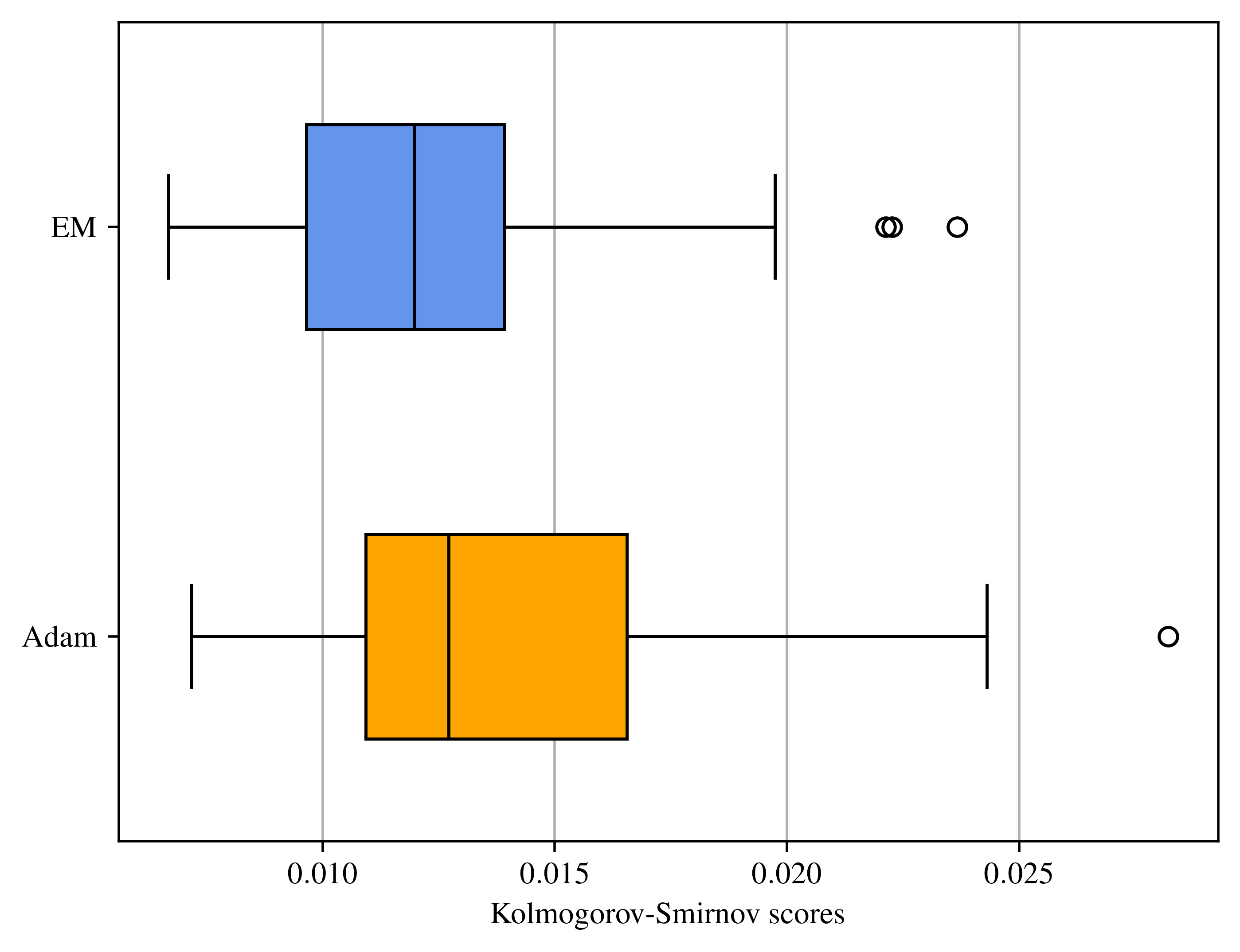}
\label{main_ks}
\end{subfigure}
\caption{Histograms (with corresponding kernel density estimates) of parameter estimates and boxplots of KS scores obtained using EM and Adam from $100$ simulations of $\numprint{3000}$ events on a fully connected MEG with $n=2$, $\lambda_{ij}(t)=\alpha_i(t) + \beta_j(t)$, $r=\infty$, $\bm\alpha=[0.01,0.05]$, $\bm\beta=[0.07,0.03]$, $\bm\mu=[0.2,0.15]$, $\bm\mu^\prime=[0.1,0.25]$, $\bm\phi=[0.8,0.85]$, $\bm\phi^\prime=[0.9,0.75]$.}
\label{main_sim}
\end{figure}

In order to evaluate Algorithm~\ref{em} and \ref{adam} and their performance at estimating MEG models, simulated network data are initially used. In this section, a small fully connected directed graph with $n=2$ is considered. Two types of mutually exciting graphs with $r=\infty$ and $\tau_{ij}=0$ are generated: 
 \begin{enumerate*}[label=(\roman*)] \item MEGs with $\lambda_{ij}(t)=\alpha_i(t) + \beta_j(t)$, $\alpha_i(t)$ and $\beta_j(t)$ as in \eqref{node_process} and \eqref{scaled_expo}, with $\bm\alpha=[0.01,0.05]$, $\bm\beta=[0.07,0.03]$, $\bm\mu=[0.2,0.15]$, $\bm\mu^\prime=[0.1,0.25]$, $\bm\phi=[0.8,0.85]$, $\bm\phi^\prime=[0.9,0.75]$, and
 \item MEGs with $\lambda_{ij}(t)=\gamma_{ij}(t)$, \textit{cf.} \eqref{edge_process}, $\bm\gamma=[0.1,0.5]$, $\bm\gamma^\prime=[0.1,0.3]$, $\bm\nu=[0.6,0.4]$, $\bm\nu^\prime=[0.5,0.25]$, $\bm\theta=[0.4,0.6]$, $\bm\theta^\prime=[0.5,0.75]$. 
 \end{enumerate*}
 For each of the two MEG models, $\numprint{3000}$ events are simulated using Algorithm~\ref{algo_sim}, and the process parameters are estimated from the simulated events via EM (Algorithm~\ref{em}) and Adam (Algorithm~\ref{adam}, with $\eta=0.05$), initialising the parameters at random from a uniform distribution in $(0.1,1)$. The estimation procedure is repeated $5$ times from different random initialisation points, and the final estimates corresponding to the highest log-likelihood are retained. Using the estimated parameters, the $p$-values \eqref{pval} are then calculated for all simulated events. Finally, the Kolmogorov-Smirnov (KS) score against the uniform distribution is calculated on those $p$-values. The procedure is then repeated $100$ times, obtaining a set of parameter estimates for each simulated MEG. 
 
The results are plotted in Figures~\ref{main_sim} and~\ref{inter_sim}, which report the histograms and kernel density estimates of identifiable transformations of the parameters for each of the four network edges, obtained using the EM and Adam algorithms, compared to the true value of the parameters. 
Furthermore, the figures report the boxplots of the KS scores. 
 
\begin{figure}[!t]
\centering
\begin{subfigure}[t]{.495\textwidth}
\centering
\caption{Baseline}
\includegraphics[width=0.975\textwidth]{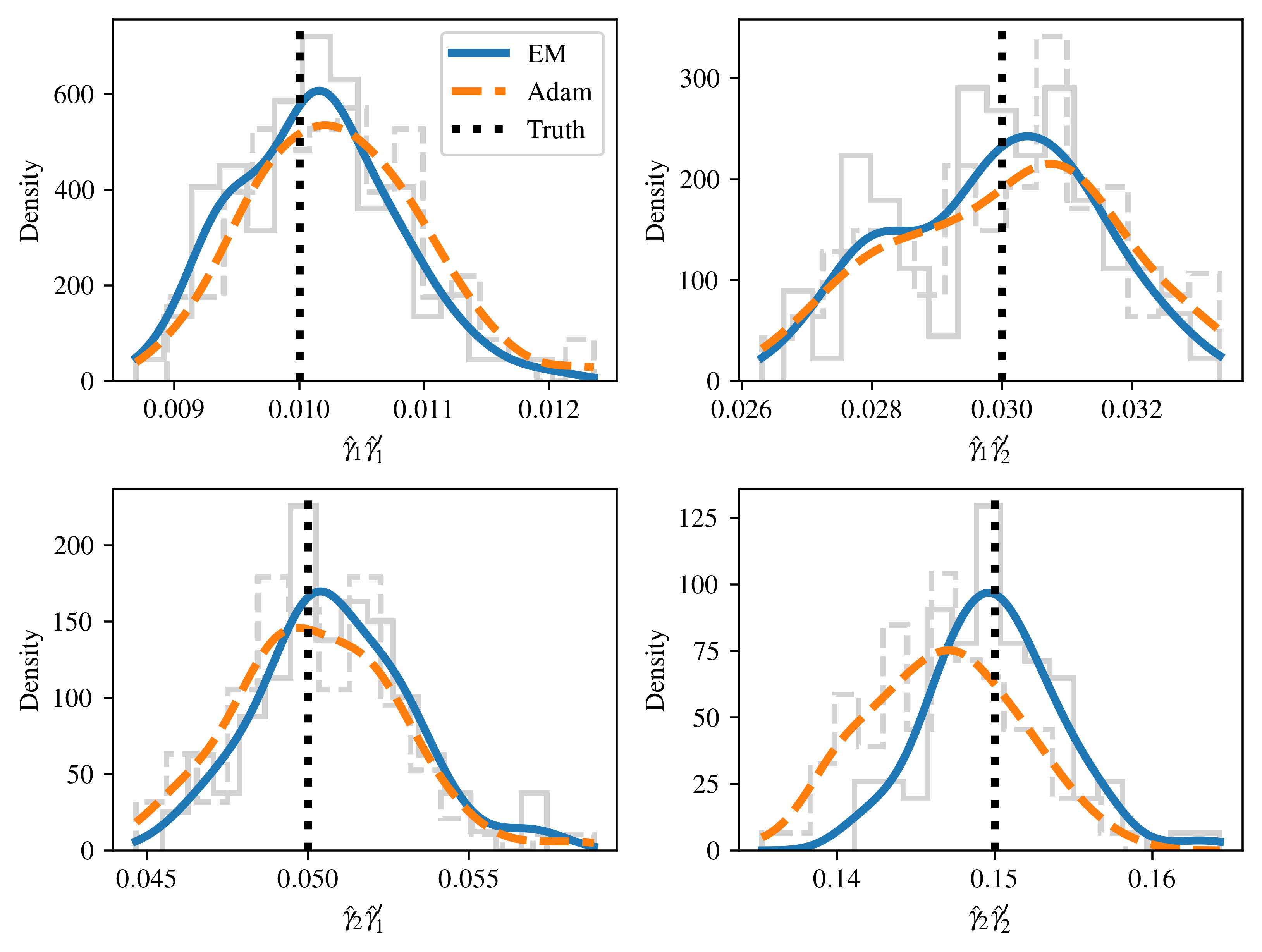}
\label{inter_gamma}
\end{subfigure}
\begin{subfigure}[t]{0.495\textwidth}
\centering
\caption{Jump}
\includegraphics[width=0.975\textwidth]{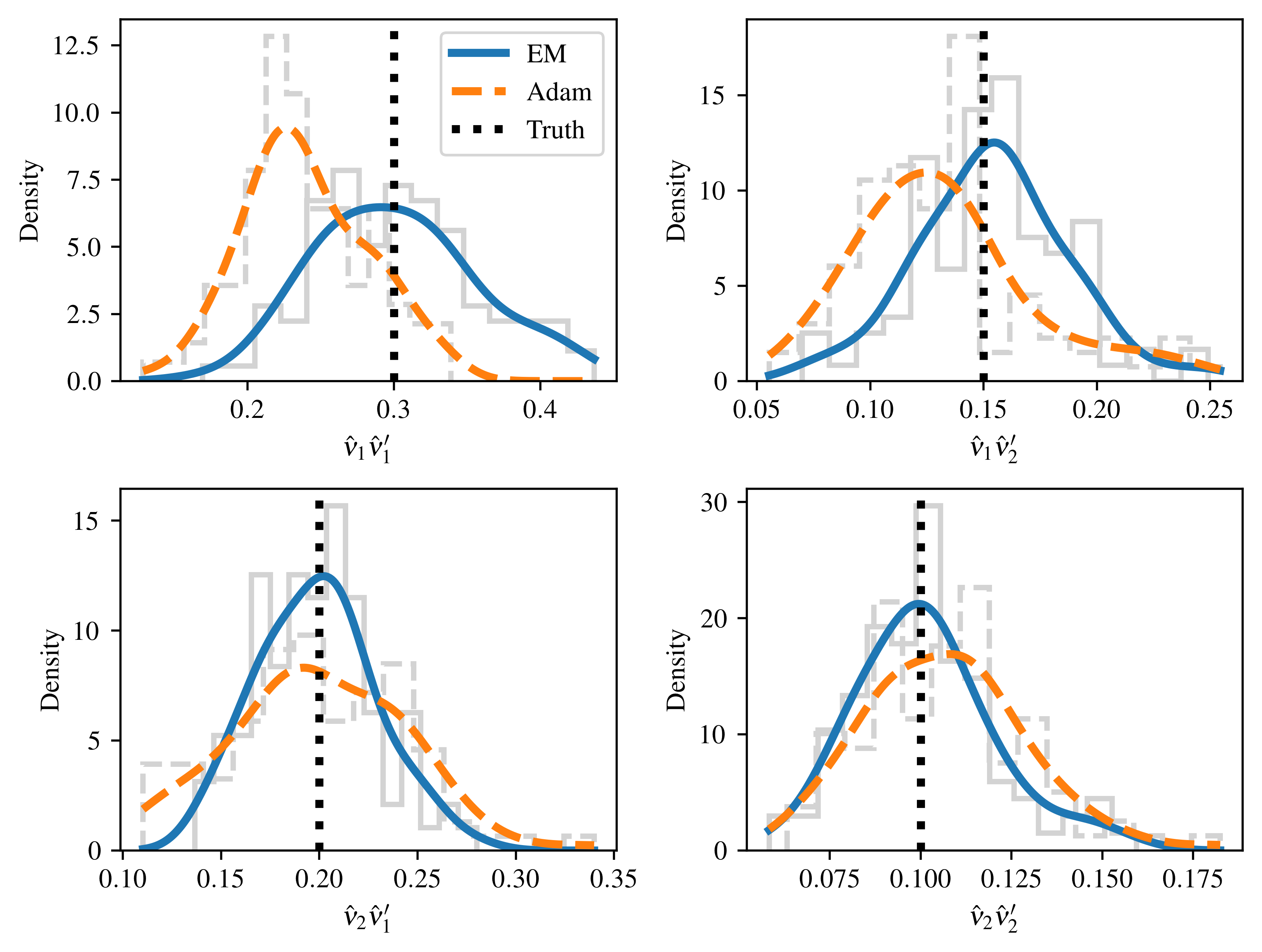}
\label{inter_jump}
\end{subfigure}
\begin{subfigure}[t]{0.495\textwidth}
\centering
\caption{Decay}
\includegraphics[width=0.975\textwidth]{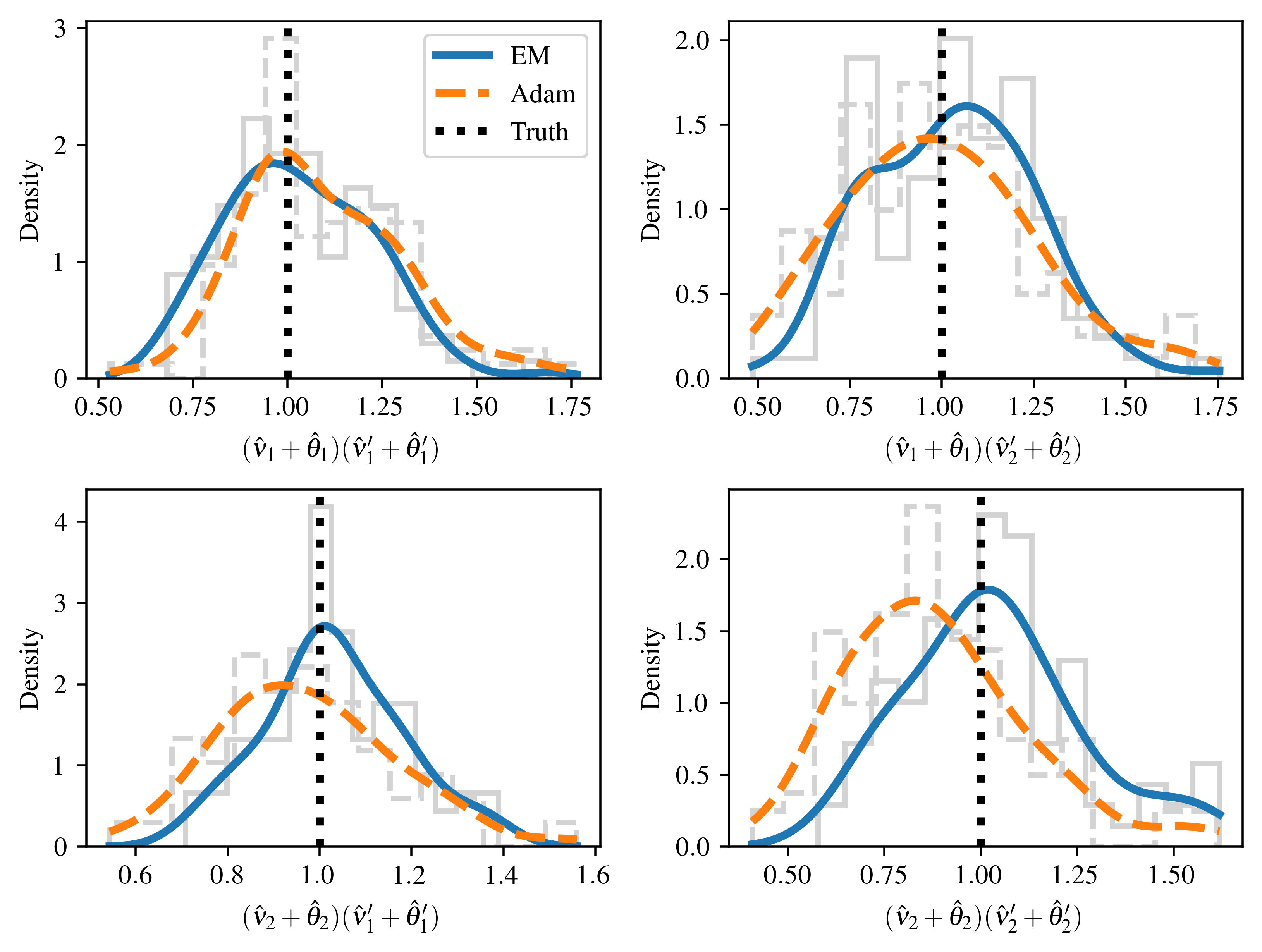}
\label{inter_decay}
\end{subfigure}
\begin{subfigure}[t]{0.495\textwidth}
\centering
\caption{KS scores}
\includegraphics[width=0.975\textwidth]{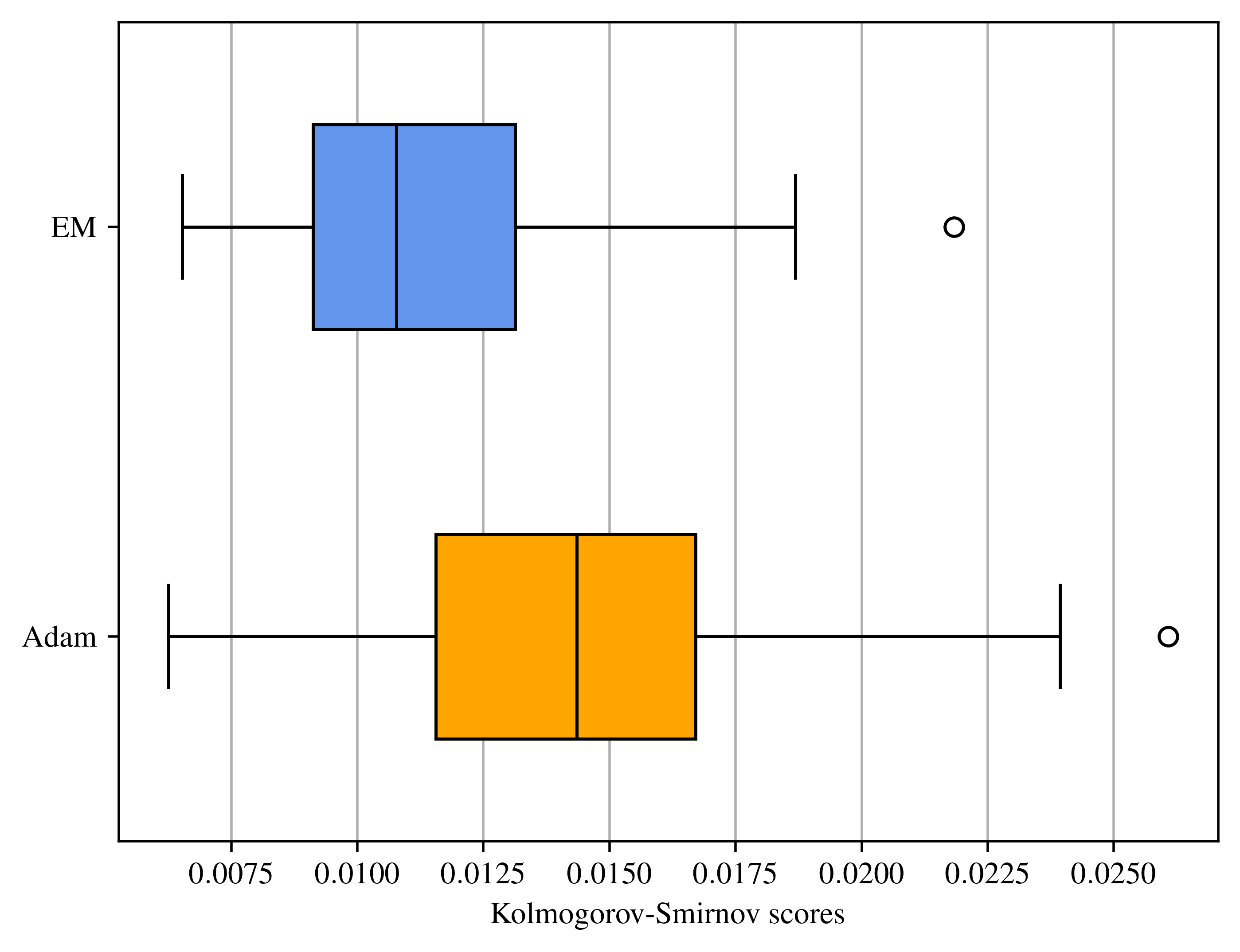}
\label{inter_ks}
\end{subfigure}
\caption{Histograms (with corresponding kernel density estimates) of parameter estimates and boxplots of KS scores obtained using EM and Adam from $100$ simulations of $\numprint{3000}$ events on a fully connected MEG with $n=2$, $\lambda_{ij}(t)=\gamma_{ij}(t)$, $r=\infty$, $\bm\gamma=[0.1,0.5]$, $\bm\gamma^\prime=[0.1,0.3]$, $\bm\nu=[0.6,0.4]$, $\bm\nu^\prime=[0.5,0.25]$, $\bm\theta=[0.4,0.6]$, $\bm\theta^\prime=[0.5,0.75]$.}
\label{inter_sim}
\end{figure}

Overall, it appears that the results obtained using Adam are only marginally worse than those obtained using the EM algorithm. In particular, the distributions of estimates obtained using the two methodologies appear to be roughly centred around the true value of the parameters, but the EM estimates appear to be slightly more precise and accurate when compared to Adam. In both cases, the KS scores are extremely small, demonstrating an excellent fit.
Furthermore, it is possible to compare the performance of the two inferential algorithms when the number of events increases: Figure~\ref{main_asy} reports the histograms of estimates of the decay $\bm\mu+\bm\phi$ obtained using only $250, 500, \numprint{1000}$ and $\numprint{2000}$ of the $\numprint{3000}$ simulated events on each graph. The performance of both estimation procedures improves in terms of KS scores and variance of estimates when more observations are available. 

\begin{figure}[!t]
\centering
\begin{subfigure}[t]{.24\textwidth}
\centering
\caption{$m=250$}
\includegraphics[width=0.975\textwidth]{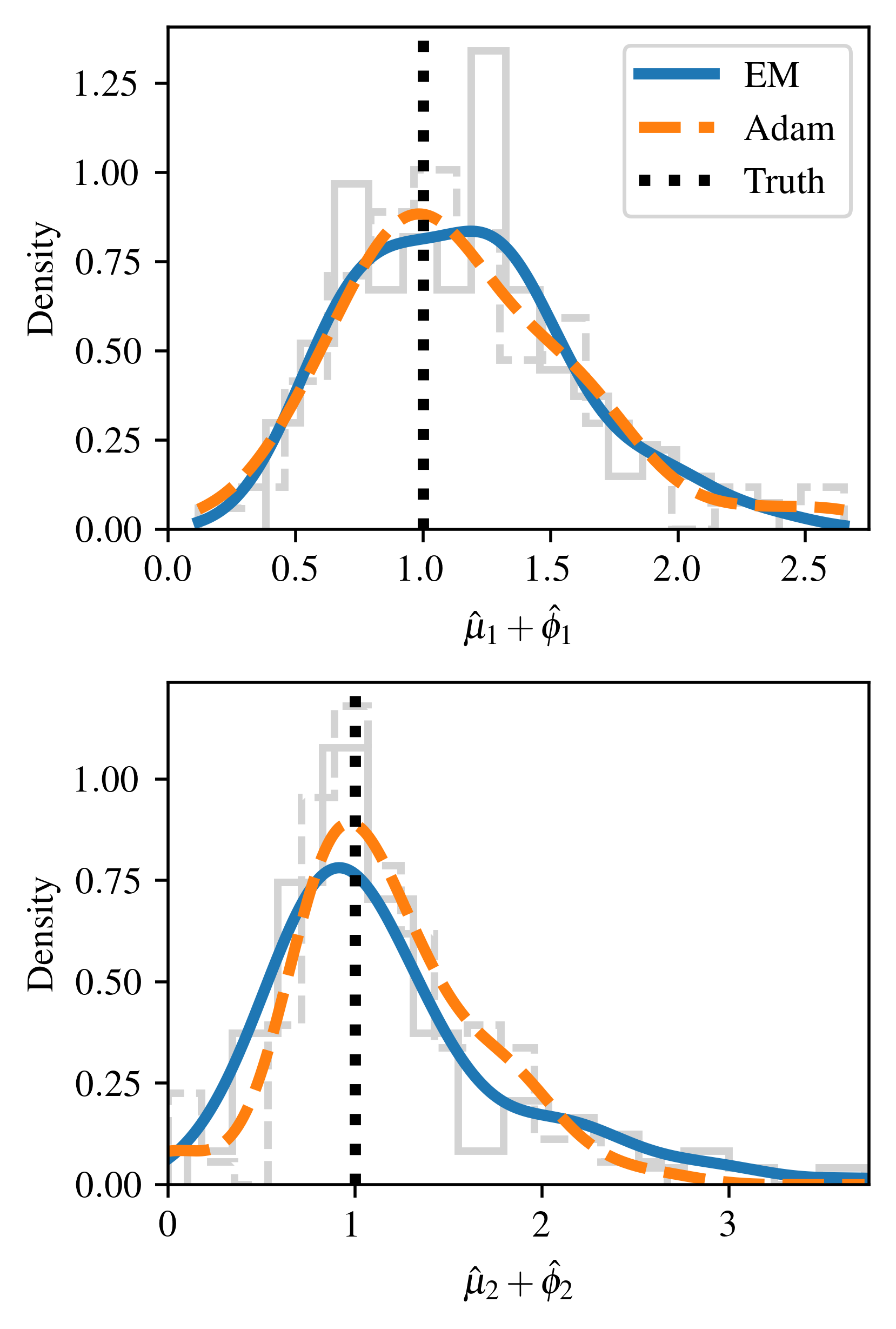}
\includegraphics[width=0.975\textwidth]{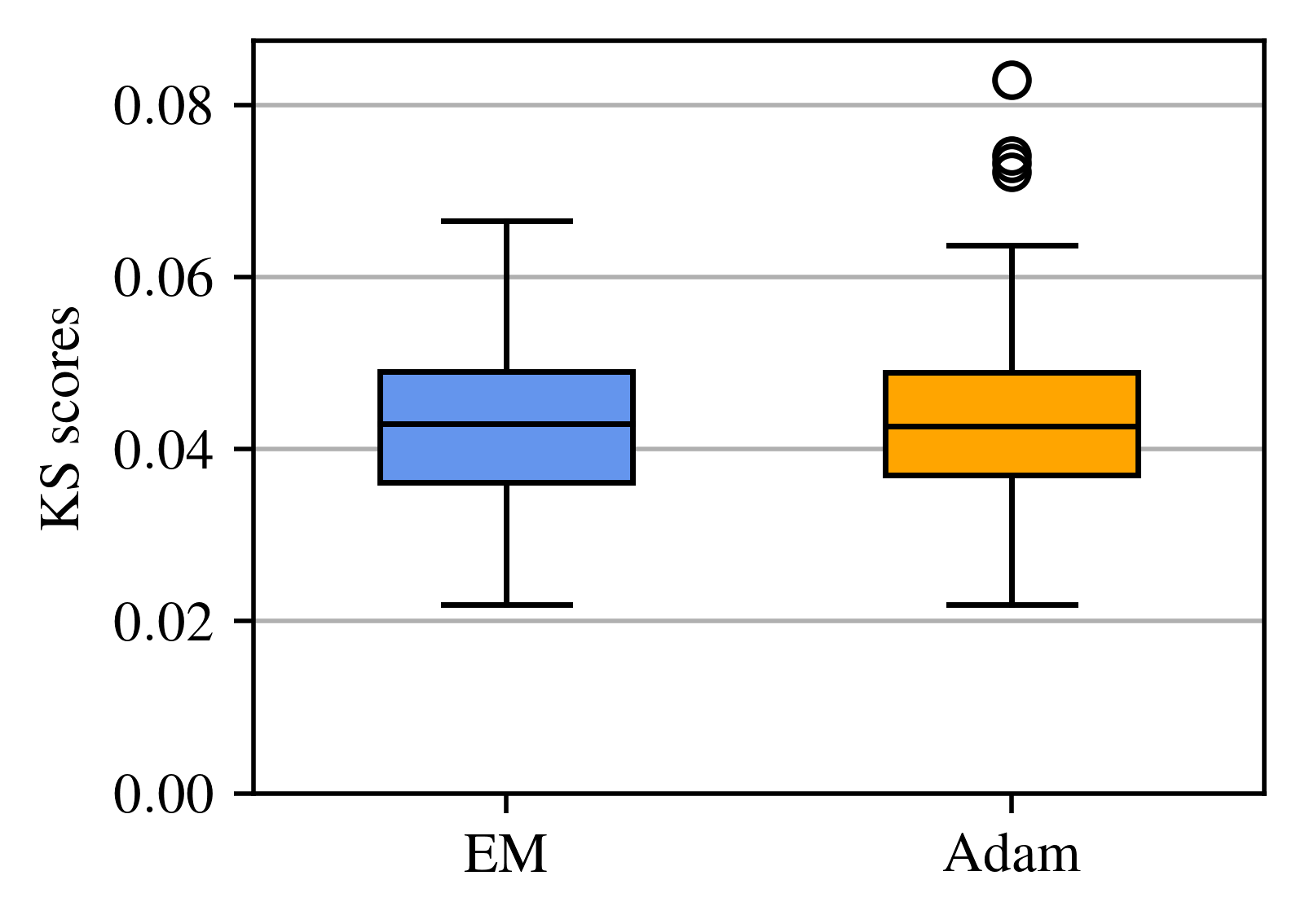}
\end{subfigure}
\begin{subfigure}[t]{.24\textwidth}
\centering
\caption{$m=500$}
\includegraphics[width=0.975\textwidth]{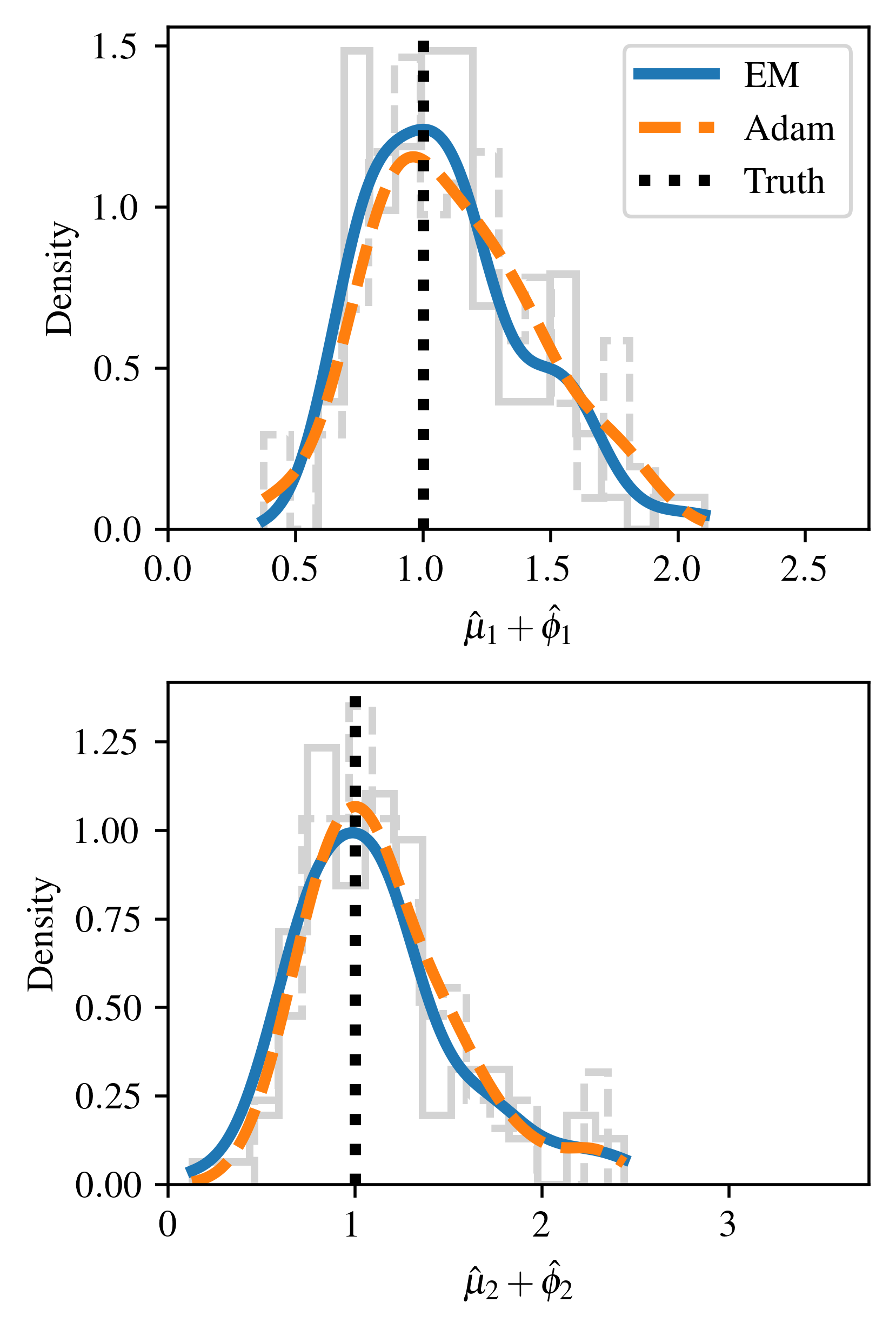}
\includegraphics[width=0.975\textwidth]{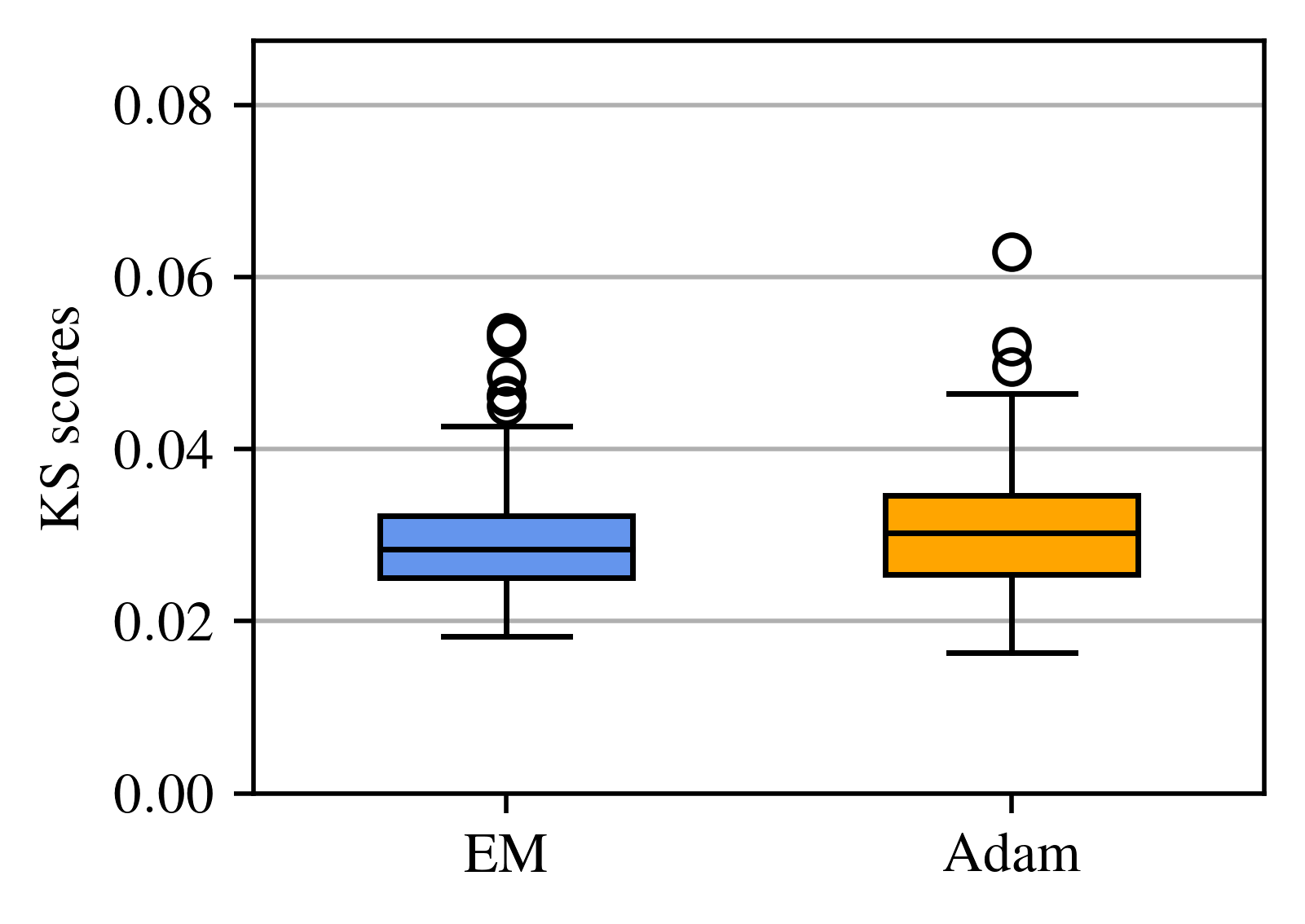}
\end{subfigure}
\begin{subfigure}[t]{.24\textwidth}
\centering
\caption{$m=\numprint{1000}$}
\includegraphics[width=0.975\textwidth]{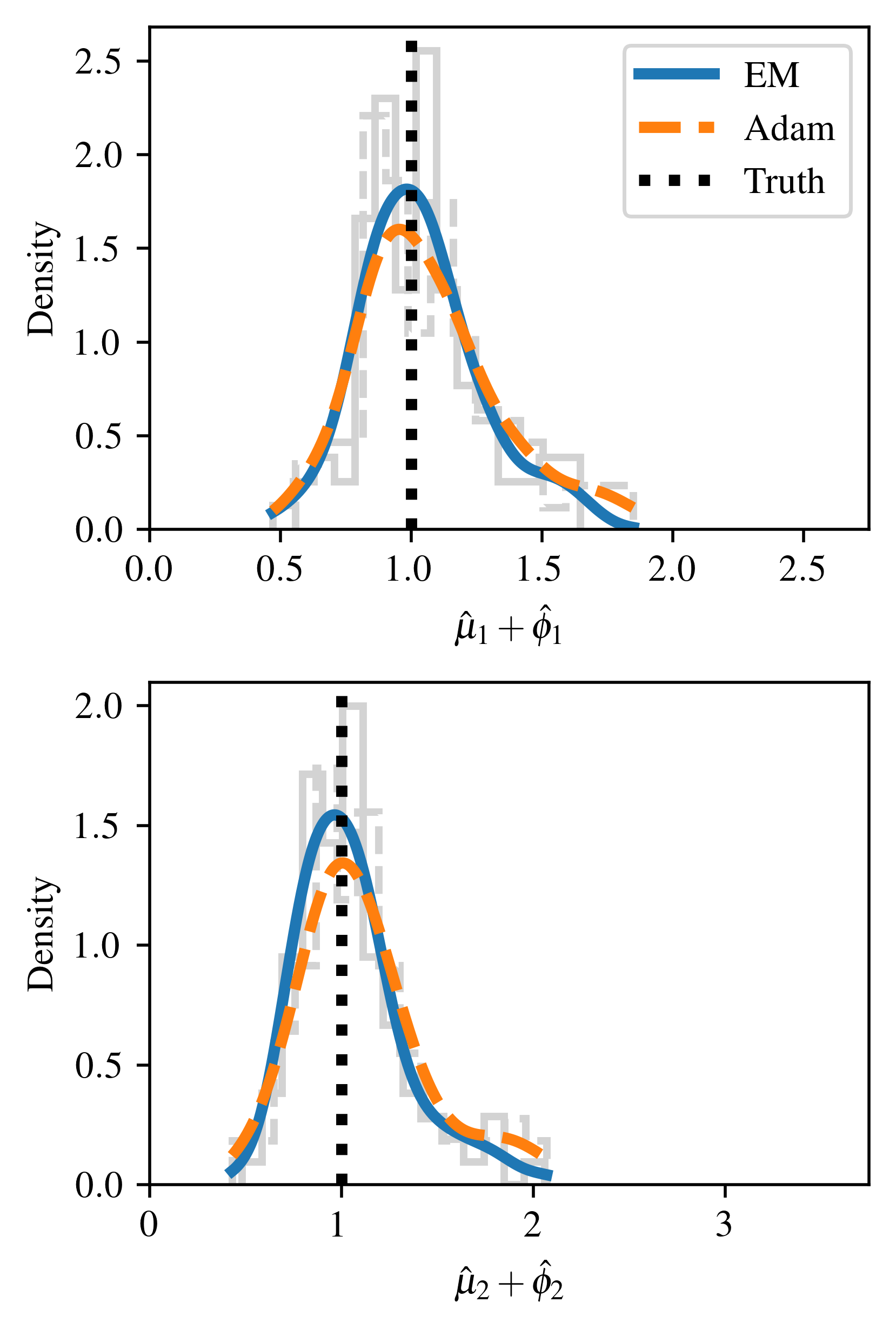}
\includegraphics[width=0.975\textwidth]{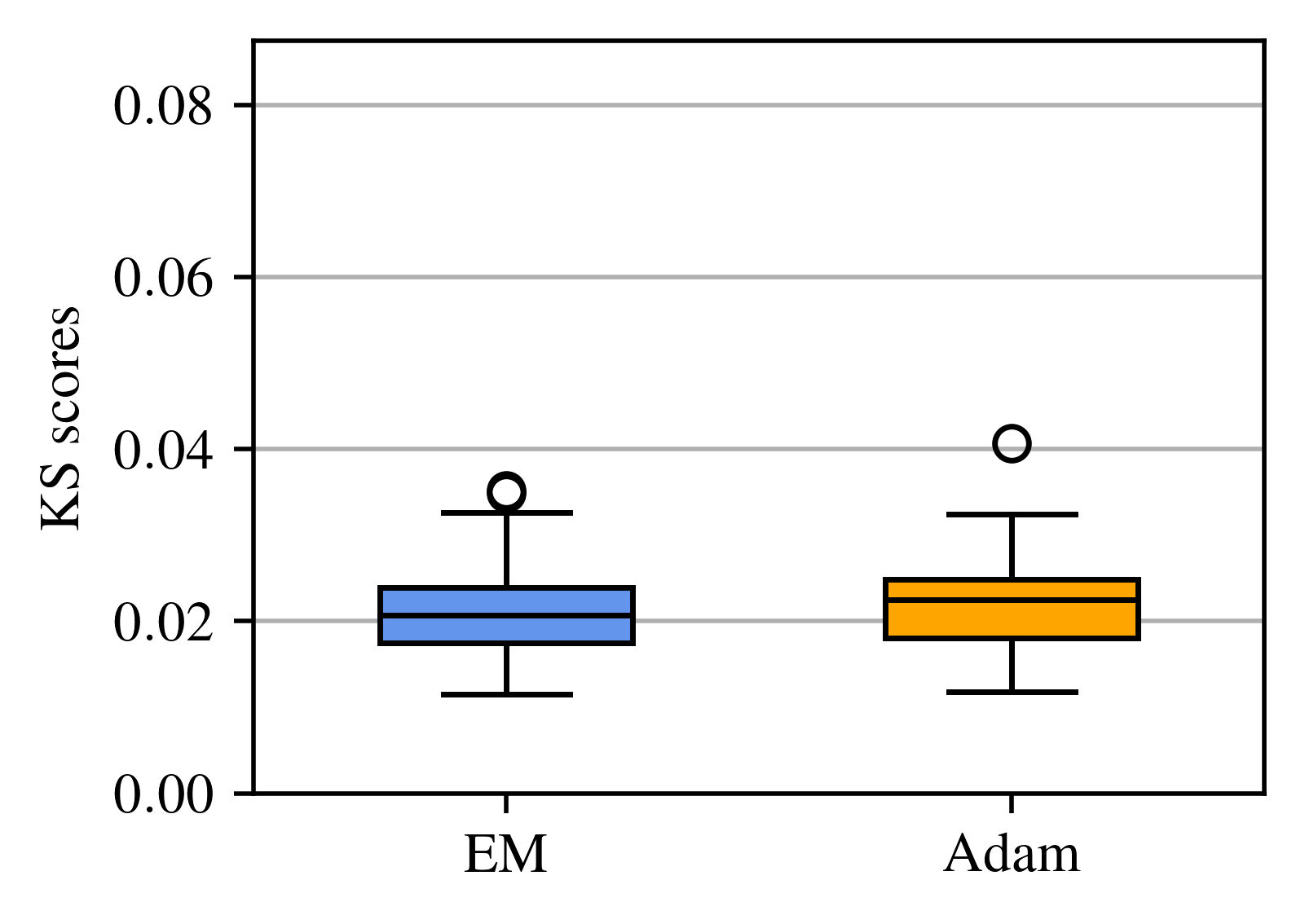}
\end{subfigure}
\begin{subfigure}[t]{.24\textwidth}
\centering
\caption{$m=\numprint{2000}$}
\includegraphics[width=0.975\textwidth]{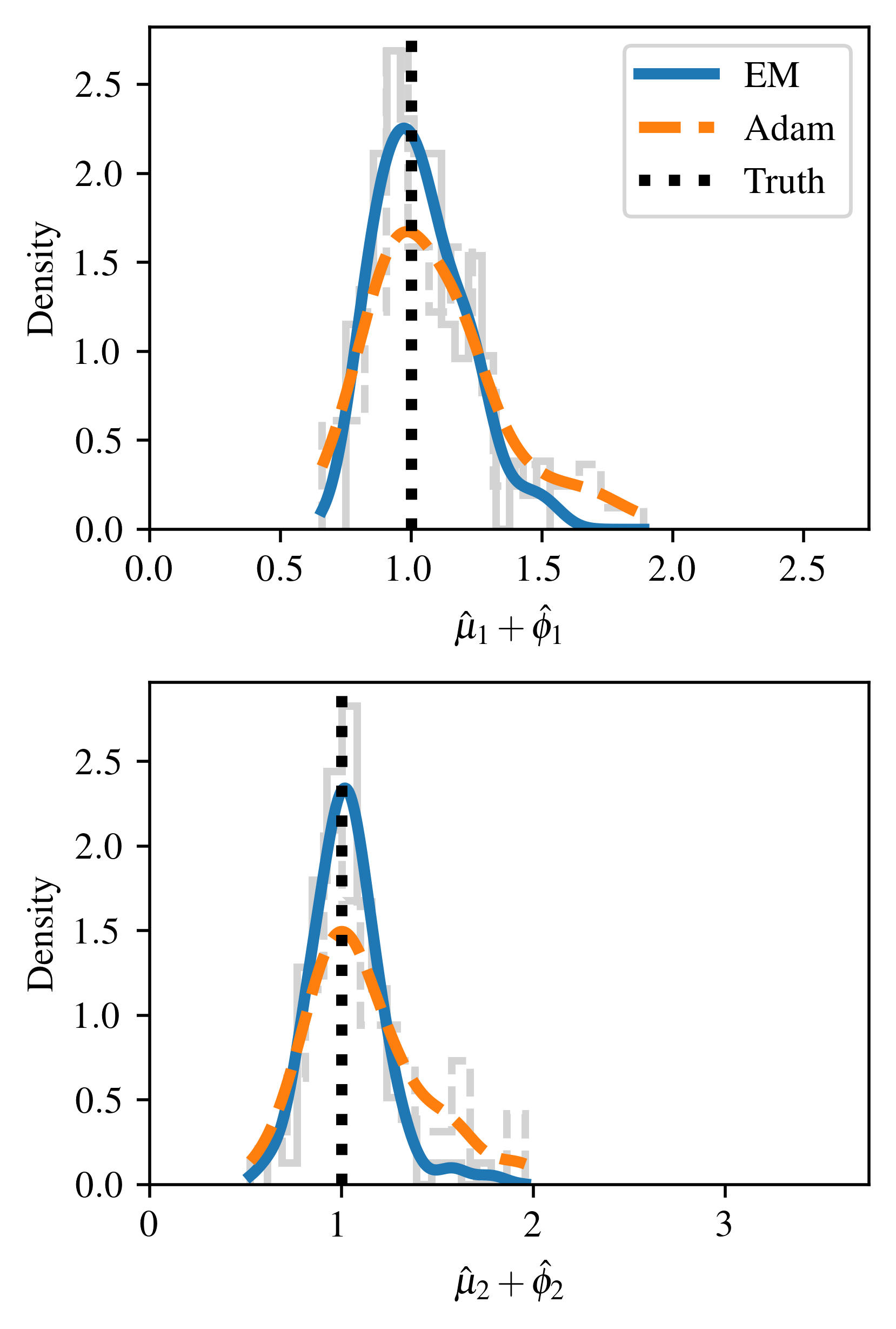}
\includegraphics[width=0.975\textwidth]{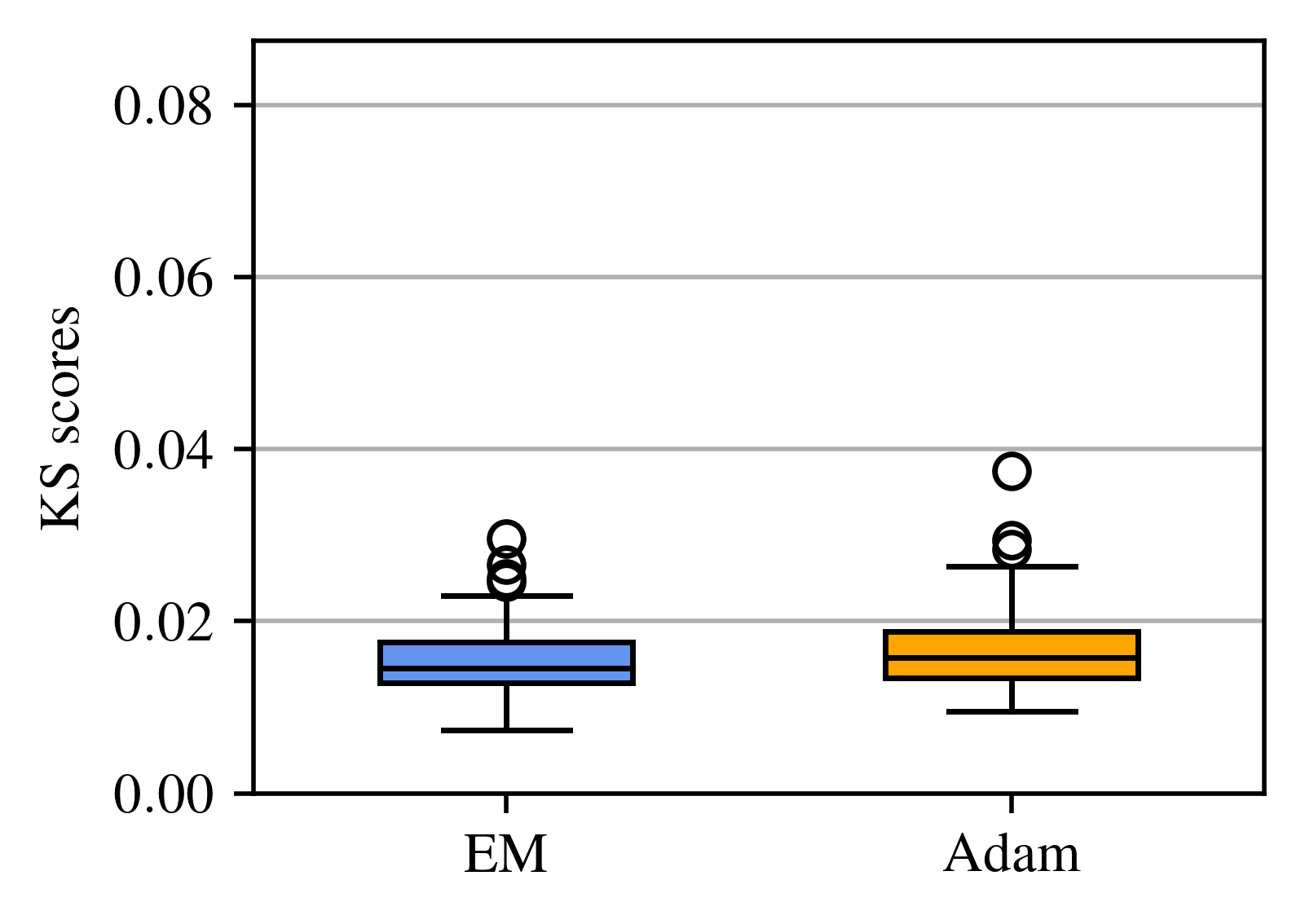}
\end{subfigure}
\caption{Histograms (with corresponding kernel density estimates) of estimates for $\bm\mu+\bm\phi$ and boxplots of KS scores obtained using EM and Adam from $100$ simulations from the same model as Figure~\ref{main_sim}, with $m\in\{250,500,\numprint{1000},\numprint{2000}\}$ events.}
\label{main_asy}
\end{figure}

The main advantage of Adam over the EM algorithm for $r=\infty$ is mainly given by the recursive form of the log-likelihood, which enables calculations in linear time on each edge for the gradients. On the other hand, the EM algorithm requires a quadratic number of additional latent variables defined on each edge, which becomes unsustainable for efficient inference on large graphs. Therefore, for the remainder of this work, Adam will be used, primarily because of computational reasons. 

\subsection{Simulated events on Erd\H{o}s-R\'{e}nyi graphs} \label{sim_data}

In order to further evaluate estimation of MEG models, events on an Erd\H{o}s-R\'{e}nyi graph are also simulated. 
First, an adjacency matrix is simulated from an Erd\H{o}s-R\'{e}nyi graph with $n=10$ nodes, such that $A_{ij}\sim\mathrm{Bernoulli}(p)$, with $p=1/4$. For the edges such that $A_{ij}=1$, then $\tau_{ij}=0$, otherwise if $A_{ij}=0$, then $\tau_{ij}=\infty$. 
Second, $m=\numprint{2500}$ event times are generated from a MEG model with $r=\infty$ using Algorithm~\ref{algo_sim}, with parameters in $\bm\Psi$ sampled at random from uniform distributions, restricted to the following ranges: $\alpha_i,\beta_j\in(10^{-5},10^{-4}),\ \mu_i,\mu_j^\prime,\phi_i,\phi^\prime_j\in(10^{-2},10^{-1}),\ \gamma_{i\ell},\gamma_{j\ell}^\prime\in(10^{-5},10^{-1}),\ \nu_{i\ell},\nu_{j\ell}^\prime\in(10^{-2},1)$, and $\theta_{i\ell}=1-\nu_{i\ell}$, $\theta_{j\ell}^\prime=1-\nu_{j\ell}^\prime$. In the simulation, the expected number of events per active edge is $m/[pn(n-1)]\approx 111$. Algorithm~\ref{adam} is used to estimate $2n\times 6=120$ parameters, with learning rate $\eta=0.1$, after a random initialisation from the same uniform distributions used in the data generating process. The entire procedure is repeated $100$ times. 

A second simulation is conducted for an Erd\H{o}s-R\'{e}nyi graph graph with $n=20$ nodes and $p=1/4$, simulating $m=\numprint{10000}$ events from a MEG model with interaction term only, corresponding to $\lambda_{ij}(t)=A_{ij}\gamma_{ij}(t)$, with $r=1$ and $d=5$. A minor modification is made to the range of the uniform distributions for sampling some of the interaction term parameters: $\gamma_{i\ell},\gamma_{j\ell}^\prime\in(10^{-5},10^{-1}),\ \nu_{i\ell},\nu_{j\ell}^\prime,\theta_{i\ell},\theta_{j\ell}^\prime\in(10^{-2},1)$. Despite the simpler form of the intensity functions $\lambda_{ij}(t)$, 
more parameters must be estimated ($2n\times 3d=600$) compared to the first simulation, and the expected number of connections per edge is only $105$. 
As before, $100$ MEGs are generated, and Adam (Algorithm~\ref{adam}) is used to estimate the parameters, with learning rate $\eta=10^{-3}$. 
The resulting boxplots of the KS test obtained for the two simulations are plotted in Figure~\ref{ks_sim}. 
Both boxplots demonstrate that the algorithm is able to recover sensible estimates of the parameter values, resulting in small KS scores, corresponding to a good model fit.

\begin{figure}[!t]
\centering
\includegraphics[width=.8\textwidth]{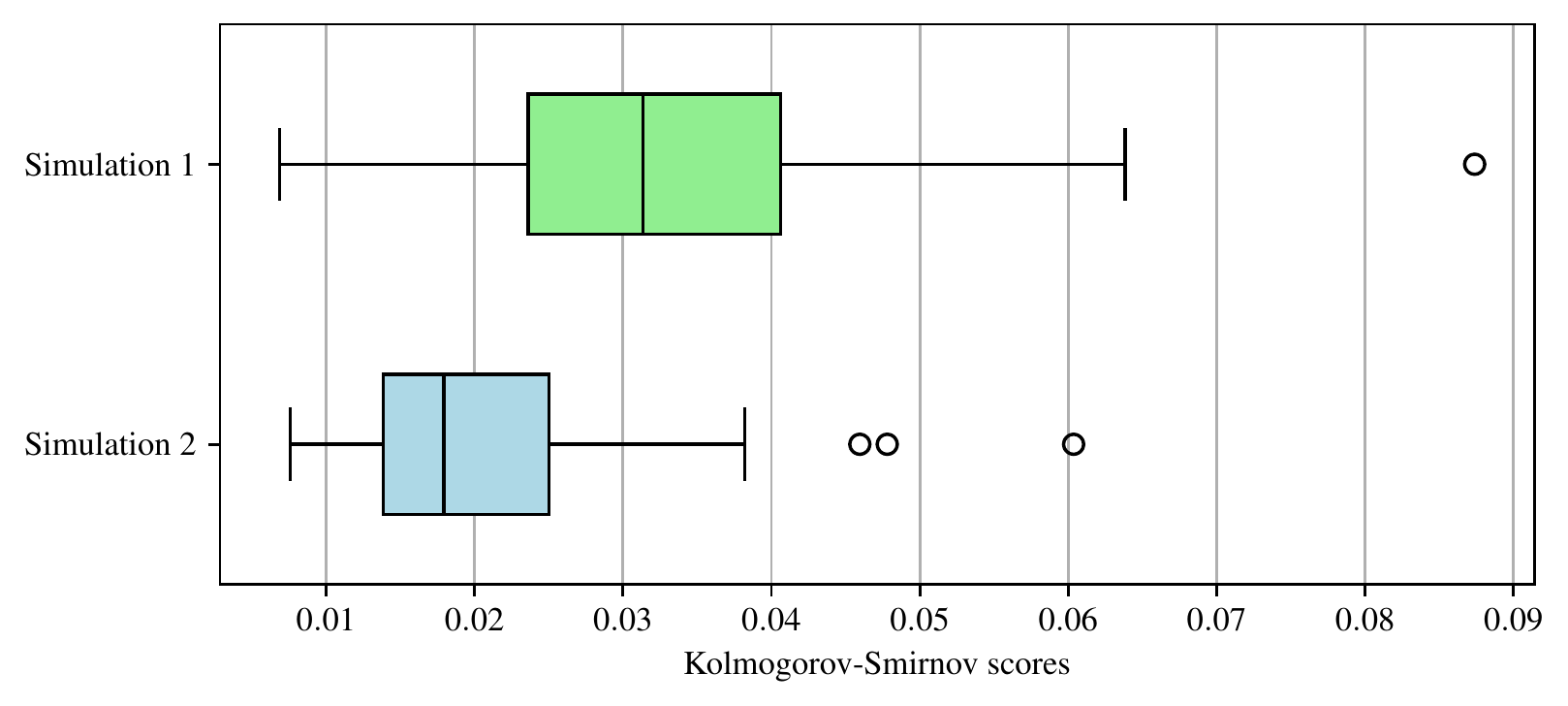}
\caption{Boxplots of the Kolmogorov-Smirnov scores obtained for the two simulations described in Section~\ref{sim_data}.}
\label{ks_sim}
\end{figure}

\subsection{Enron e-mail network} \label{enron_data}

The Enron e-mail network collection
is a record of e-mails exchanged between the employees of Enron Corporation before its bankruptcy. These data have already been demonstrated to be well-modelled as self-exciting point processes by \cite{Fox16}. In this article, the version of these data\footnote{The data are freely available at \url{http://www.cis.jhu.edu/~parky/Enron/}.} used in \cite{Priebe05} is analysed, where e-mails recorded multiple times have been used only once, and e-mails with incorrectly recorded sent times (coded in the data with 9pm, 31 December 1979) have been removed. After such pre-processing, the e-mail data consist of $\numprint{34427}$ distinct triplets $(x_k,y_k,t_k)$, corresponding to messages exchanged between $n=184$ employees between November 1998 and June 2002, forming a total of $\numprint{3007}$ graph edges. Note that some of the emails are sent to multiple receivers, and only $\numprint{18031}$ unique event times are observed, implying that on average each e-mail is sent to approximately $1.90$ nodes. 

Because an e-mail can have multiple recipients, and because the event times are recorded to the nearest second, the likelihood \eqref{loglik} must be adapted slightly to handle tied arrival times. 
An approach used by \citet[Section 8.1]{PriceWilliams19} is followed, with the arrivals modelled by an analogous discrete time process: 
In particular, arrivals at time $t$ are assumed to contribute to the 
the intensities $\lambda_{ij}(\cdot)$ from time $t+\dif t$ onwards, where $\dif t$ is the sampling interval, equal to one second in this example. 
The $p$-values of the process are approximated using \eqref{pval}, following \cite{Fox16}.

The model is trained on $\numprint{30704}$ e-mails sent before 1st December, 2001, and tested on the remaining $\numprint{3723}$ e-mails. In the training set, $\numprint{2720}$ edges are observed, and $\numprint{811}$ in the test set, of which $\numprint{287}$ are \textit{not} observed in the training period. One of the advantages of the proposed methodology is the possibility to score events for such new links. 

A range of MEG models are fitted to the training data, using different combinations of $r$ and $d$ for characterising main effects and interactions. A good configuration for the initial parameter values is obtained through utilising the quantities $u_i=\tfrac{N_i(T)}{nT}$ and $u_j^\prime=\tfrac{N_j^\prime(T)}{nT}$, corresponding to the average rate of incoming and outgoing connections observed for each node. In particular, good results and convergence are obtained setting initial values $\alpha_i=\mu_i=u_i,\ \phi_i=3u_i$, $\beta_j=\mu_j^\prime=u_j^\prime$, and $\phi_j^\prime=3u_j^\prime$. For the interaction term, the initial values used to obtain the results are $\gamma_{i\ell}=\gamma_{j\ell}^\prime=\nu_{i\ell}=\nu_{j\ell}^\prime=10^{-4}$, and $\theta_{i\ell}=\theta_{j\ell}^\prime=5\cdot10^{-4}$. If $d>1$, then Gaussian noise with standard deviation $2\cdot10^{-5}$ is added to the interaction parameters. In general, the algorithm is fairly robust to different initialisations if the scale of the parameters is similar to the choices above. The learning rate $\eta$ is set to $0.1$.

Three strategies are used for estimation of $\tau_{ij}$: \begin{enumerate*}[label=(\roman*)] \item \label{item:mle} Using the MLE $\hat\tau_{ij}=t_{\ell_{ij1}}$; \item Setting $\tau_{ij}=0$; \item \label{item:aij} Setting $\tau_{ij}=0$ if $A_{ij}=1$, and $\tau_{ij}=\infty$ if $A_{ij}=0$ \end{enumerate*}. The MLE approach \ref{item:mle} has a drawback: the $p$-values \eqref{pval} for the first observation on each edge are \textit{always} 1. This implies that the KS scores are bounded below by ${2720}/{30704}\approx0.0885$ for the training set and $287/{3723}\approx 0.0770$ for the test set. 

The KS scores obtained on the training and test sets after fitting different MEG models are reported in Table~\ref{ks_enron}. 
The best performance (KS score $0.0152$) is achieved when a Markov process is used for the interaction term, with $d=5$ or $d=10$, combined with a Hawkes process for the main effects, setting $\tau_{ij}$ using option \ref{item:aij}. 
The same model achieves the best performance when alternative strategies for estimation of $\tau_{ij}$ are used. If $\tau_{ij}$ is set to its MLE \ref{item:mle}, then the lower bound for the KS score on the training set is attained. 
In general, setting $\tau_{ij}$ using option \ref{item:aij} seems to outperform competing strategies for estimation of $\tau_{ij}$ in terms of KS scores. 
More importantly, overall the results demonstrate that the interaction term plays a key role in obtaining a good fit on the observed event times. 

\begingroup

\begin{table}[!t]
\centering
\caption{Training and test Kolmogorov-Smirnov scores on the Enron e-mail network for different configurations of the MEG model.}
\label{ks_enron}
\scalebox{0.75}{
\begin{tabular}{c | c c | c c c c c}
\toprule
\multicolumn{3}{c|}{\textbf{KS scores (train \& test)}}  & \multicolumn{4}{c}{Main effects $\alpha_i(\cdot)$ and $\beta_j(\cdot)$ $\downarrow$} \\
\midrule
\multicolumn{1}{c|}{$\tau_{ij}$ $\downarrow$} & \multicolumn{2}{c|}{Interactions $\gamma_{ij}(\cdot)$ $\downarrow$} & Absent & Poisson ($r=0)$ & Markov ($r=1$) & Hawkes ($r=\infty$) \\
\midrule

\multirow{10}{*}{\parbox{4cm}{\centering $\tau_{ij}=t_{\ell_{ij1}}$ (MLE)}} & \multicolumn{2}{c|}{Absent} & -- \hspace{.8cm} -- & 0.4530 0.4133 & 0.3678 0.3484 & 0.4443 0.3586 \\
\cline{2-7}

& \multirow{3}{*}{\parbox{1.5cm}{\centering Poisson \\ ($r=0$)}} & $d=1$ & 0.4252 0.4221 & 0.3946 0.4179 & 0.3434 0.3574 & 0.4255 0.3560 \\
& & $d=5$ & 0.3490 0.3851 & 0.3498 0.3953 & 0.3165 0.3677 & 0.3491 0.3613 \\
& & $d=10$ & 0.3339 0.3763 & 0.3347 0.3688 & 0.3112 0.3470 & 0.3376 0.3575 \\
\cline{2-7}

& \multirow{3}{*}{\parbox{1.5cm}{\centering Markov \\ ($r=1$)}} & $d=1$ & 0.1662 0.2029 & 0.1491 0.1945 & 0.1305 0.1777 & 0.1702 0.1874 \\
& & $d=5$ & 0.0916 0.1875 & 0.0910 0.1684 & \textbf{0.0885 0.1628} & 0.0916 0.1746 \\
& & $d=10$ & \textbf{0.0885} 0.1743 & \textbf{0.0885} 0.1848 & \textbf{0.0885} 0.1696 & \textbf{0.0885} 0.1743 \\
\cline{2-7}

& \multirow{3}{*}{\parbox{1.5cm}{\centering Hawkes \\ ($r=\infty$)}} & $d=1$ & 0.2640 0.2755 & 0.2825 0.2887 & 0.2538 0.2637 & 0.2599 0.2871 \\
& & $d=5$ & 0.2304 0.2904 & 0.2284 0.2760 & 0.2271 0.2774 & 0.2420 0.2981 \\
& & $d=10$ & 0.2461 0.2923 & 0.2521 0.2865 & 0.2413 0.3091 & 0.2498 0.3129 \\

\midrule 

\multirow{10}{*}{\parbox{4cm}{\centering $\tau_{ij}=0$}} & \multicolumn{2}{c|}{Absent} & -- \hspace{.8cm} -- & 0.7678 0.7983 & 0.7456 0.7360 & 0.7058 0.6046 \\
\cline{2-7}

& \multirow{3}{*}{\parbox{1.5cm}{\centering Poisson \\ ($r=0$)}} & $d=1$ & 0.7039 0.7926 & 0.6627 0.7753 & 0.6543 0.7148 & 0.7059 0.6050 \\
& & $d=5$ & 0.5623 0.7059 & 0.5646 0.7206 & 0.5748 0.7008 & 0.7060 0.6053 \\
& & $d=10$ & 0.5354 0.6853 & 0.5332 0.6739 & 0.5725 0.6952 & 0.7060 0.6059 \\
\cline{2-7}

& \multirow{3}{*}{\parbox{1.5cm}{\centering Markov \\ ($r=1$)}} & $d=1$ & 0.3135 0.3324 & 0.3004 0.3326 & 0.3262 0.3240 & 0.2027 0.1999 \\
& & $d=5$ & 0.0760 0.1664 & 0.0825 0.1584 & 0.0855 0.1782 & 0.0495 \textbf{0.0924} \\
& & $d=10$ & 0.0775 0.1649 & 0.0793 0.1546 & 0.0816 0.1606 & \textbf{0.0402} 0.0971 \\
\cline{2-7}

& \multirow{3}{*}{\parbox{1.5cm}{\centering Hawkes \\ ($r=\infty$)}} & $d=1$ & 0.2871 0.2486 & 0.2333 0.2449 & 0.2485 0.2379 & 0.1749 0.1991 \\
& & $d=5$ & 0.1939 0.2167 & 0.1885 0.2246 & 0.2010 0.2137 & 0.1467 0.1994 \\
& & $d=10$ & 0.2029 0.2395 & 0.2158 0.2470 & 0.2207 0.2339 & 0.1606 0.1943 \\

\midrule 

\multirow{10}{*}{\parbox{4cm}{\centering $\displaystyle{\tau_{ij}=\begin{cases}0,& A_{ij}=1\\\infty,& A_{ij}=0\end{cases}}$}} & \multicolumn{2}{c|}{Absent} & -- \hspace{.8cm} -- & 0.5590 0.5941& 0.4112 0.3667 & 0.4593 0.2758 \\
\cline{2-7}

& \multirow{3}{*}{\parbox{1.5cm}{\centering Poisson \\ ($r=0$)}} & $d=1$ & 0.5158 0.6038 & 0.4812 0.5864 & 0.3742 0.3602 & 0.4197 0.2808 \\
& & $d=5$ & 0.4269 0.5516 & 0.4309 0.5641 & 0.3553 0.3598 & 0.3938 0.2803 \\
& & $d=10$ & 0.4035 0.5413 & 0.4084 0.5565 & 0.3430 0.3537 & 0.3659 0.2810 \\
\cline{2-7}

& \multirow{3}{*}{\parbox{1.5cm}{\centering Markov \\ ($r=1$)}} & $d=1$ & 0.1950 0.2115 & 0.1600 0.2017 & 0.1504 0.1422 & 0.1309 0.1445 \\
& & $d=5$ & 0.0709 0.1222 & 0.0746 0.1008 & 0.0696 0.0917 & \textbf{0.0152} 0.0848 \\
& & $d=10$ & 0.0619 0.1029 & 0.0627 0.1079 & 0.0634 0.0836 & 0.0213 \textbf{0.0800} \\
\cline{2-7}

& \multirow{3}{*}{\parbox{1.5cm}{\centering Hawkes \\ ($r=\infty$)}} & $d=1$ & 0.1870 0.2084 & 0.1816 0.2049 & 0.1783 0.1747 & 0.1719 0.1879 \\
& & $d=5$ & 0.1377 0.1805 & 0.1374 0.1840 & 0.1391 0.1642 & 0.1553 0.2154 \\
& & $d=10$ & 0.1556 0.2023 & 0.1588 0.2046 & 0.1546 0.1863 & 0.1640 0.2082 \\

\bottomrule

\end{tabular}
}
\end{table}
\endgroup

The results on the training set can also be compared to alternative node-based models from the literature. For example, \cite{Fox16} propose the following node-specific intensity function for sending e-mails:
\begin{equation}
\lambda_i(t) = \alpha_i+\sum_{k=1}^{N_i^\prime(t)} \mu_i\exp\{-(\mu_i+\phi_i)(t-t_{ik}^\prime)\}, \label{fox_model}
\end{equation}
where the intensity jumps according to the event times of the \textit{received} e-mails, \textit{cf.} \eqref{node_process} and \eqref{scaled_expo}. Despite the present article using a slightly different number of e-mails, the Kolmogorov-Smirnov score obtained on the training data using \eqref{fox_model} is $0.2806$, which corresponds almost exactly to the result in \cite{Fox16}, demonstrating that the MEG appears to have superior performance for the Enron network. The parameters of \eqref{fox_model} are estimated by direct optimisation using the Nelder-Mead method on the negative log-likelihood function for each source node. Nearly identical results to \cite{Fox16} are also obtained from fitting an independent Poisson processes $\lambda_i(t)=\alpha_i$ on each source node, with KS score $0.4088$. Finally, independent Hawkes process models of the form $\eqref{mep}$ are also fitted to each source node, obtaining a KS score of $0.2499$ which is significantly outperformed by the best configuration of the MEG model. Since the MEG model KS score outperforms the value obtained using \eqref{fox_model}, it could be inferred that users tend to respond to multiple e-mails in sessions, and not necessarily immediately after an individual e-mail is received.

\subsection{Imperial College London NetFlow data} \label{icl_data}

Many enterprises routinely collect network flow (NetFlow) data, representing summaries of connections between internet protocol (IP) addresses \citep[see, for example,][]{Hofstede14}, which should be monitored for detecting unusual network activity, security breaches, and potential intrusions. 
Modelling arrival times in computer networks is complicated by several factors: events tend to appear in bursts, they might be recorded multiple times, and exhibit polling at regular intervals \citep{Heard14}. In computer network security, it is particularly important to assess the significance of observing \textit{new links}, corresponding to connections on previously unobserved edges \citep{Metelli19}. New links might be indicative of lateral movement, which is a common behaviour of network attackers \citep{Neil13}: intruders might move across the network with the purpose of escalating credentials, establishing connections which were previously unseen or unexpected. Therefore, correctly modelling new connections, and consequently providing reliable anomaly scores, is paramount for 
network security. The proposed MEG framework for modelling point processes on networks simultaneously addresses two fundamental tasks in network security: 
monitoring the normality of observed traffic, and anomaly detection for unusual new connections. 

A bipartite dynamic network has been constructed from a subset of NetFlow data collected at Imperial College London (ICL). The network consists of $\numprint{1951067}$ arrival times recorded to the nearest millisecond, observed between 20th January 2020, and 9th February 2020, recorded from $n_1=173$ clients hosted within the Department of Mathematics at ICL, connecting to 
$n_2=\numprint{6083}$ internet servers connecting on ports 80 and 443 (corresponding to unencrypted and encrypted web traffic), forming a total of 
$\numprint{156186}$ unique edges. The periodic and automated activity has been filtered by considering only edges such that the percentage of arrival times observed between 7am and 12am is larger than 99\%, corresponding to the 
building opening hours. 
To learn connectivity patterns, the MEG model is trained on the first two weeks of data, corresponding to $\numprint{1299372}$ events, and tested on $\numprint{651695}$ events observed in the final week. The number of unique edges observed in the training period is $\numprint{115600}$, and $\numprint{70408}$ in the test set; only $\numprint{29822}$ edges are observed in both time windows, which implies that $\numprint{40586}$ new edges are observed in the test set. 

As discussed in Section~\ref{intro}, computer network data are observed in bursts and exhibit periodic behaviour. Figure~\ref{example_icl} gives an example of the connections from two of the clients to the ICL Virtual Learning Environment (VLE) server. Each session begins at an hour consistent with human behaviour, while the frequency of subsequent connections within each session is likely to be due to automated activity and page refreshing.

The models have been initialised using a similar initialisation scheme to Section~\ref{enron_data}, with learning rate $\eta=0.5$. In particular, setting $u_i=\tfrac{N_i(T)}{n_1 T}$ and $u_j^\prime=\tfrac{N_j^\prime(T)}{n_2 T}$, the chosen initial values are $\alpha_i=\mu_i=u_i,\ \phi_i=3u_i$, $\beta_j=\mu_j^\prime=u_j^\prime$, and $\phi_j^\prime=3u_j^\prime$, $\gamma_{i\ell}=(u_i)^{1/2}$, $\gamma_{j\ell}^\prime=(u_j^\prime)^{1/2}$, $\nu_{i\ell}=\nu_{j\ell}^\prime=10^{-4}$, and $\theta_{i\ell}=\theta_{j\ell}^\prime=5\cdot10^{-4}$. As before, Gaussian noise is added to the interaction parameters if $d>1$.

\begin{figure}[!t]
\centering
\begin{subfigure}[t]{.495\textwidth}
\centering
\includegraphics[width=\textwidth]{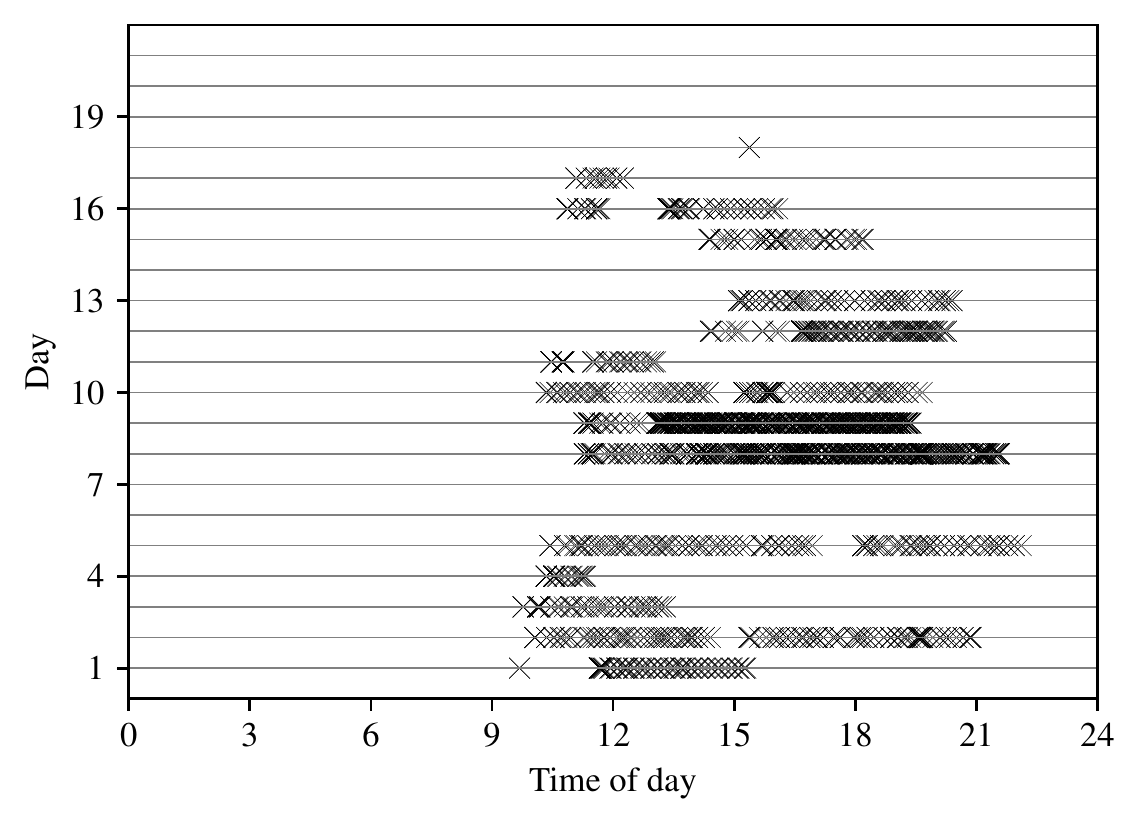}
\end{subfigure}
\begin{subfigure}[t]{.495\textwidth}
\centering
\includegraphics[width=\textwidth]{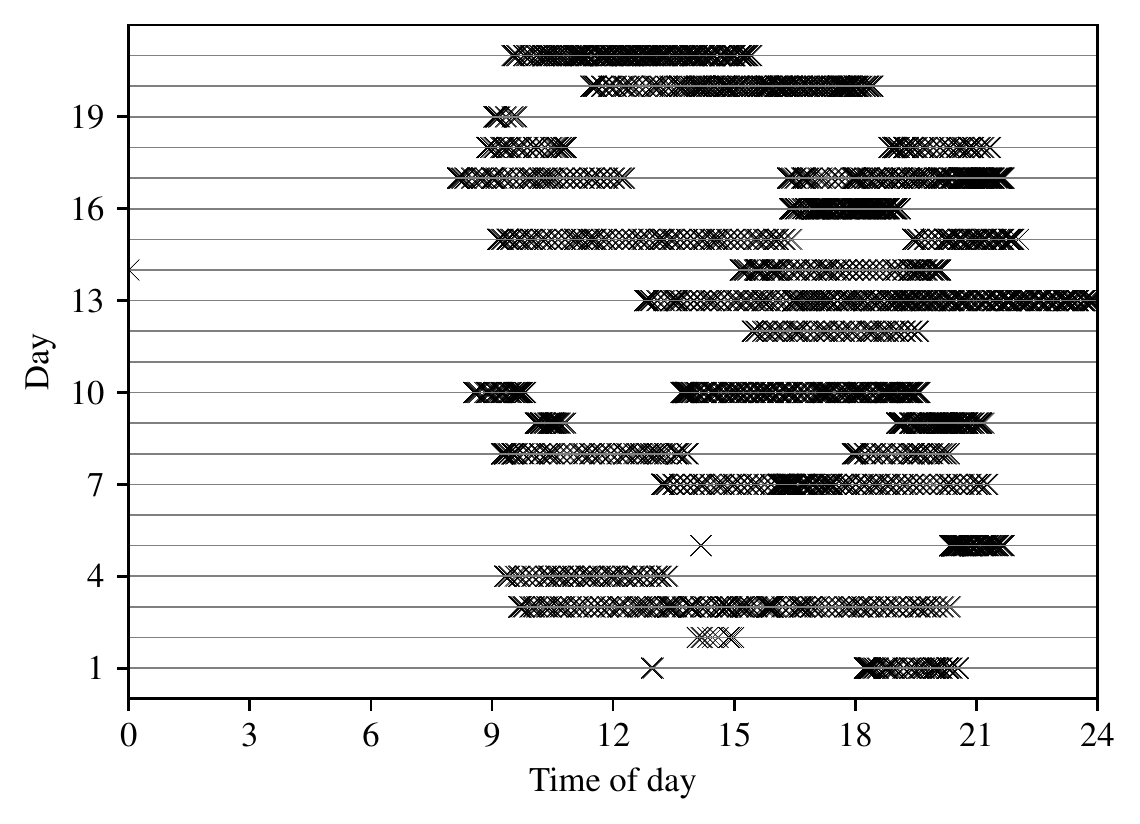}
\end{subfigure}
\caption{Connections to the ICL Virtual Learning Environment from two clients.}
\label{example_icl}
\end{figure}

The likelihood for the Hawkes process is highly multimodal, and more sensitive to the initial values of the parameters than the Markov process with $r=1$. Therefore, the parameters for the Hawkes process models are initialised with the optimal values obtained from the corresponding Markov process models, which seems to lead to fast convergence. The Kolmogorov-Smirnov scores calculated on the training and test set arrival times for different MEG models are reported in Table~\ref{ks_icl}. The parameter $\tau_{ij}$ is set according to option \ref{item:aij} from Section \ref{enron_data}, which was observed to have the best performance on the Enron data. 

The best performance (KS score $0.0728$) is achieved by a Markov process with $r=1$ for both the main effects and interactions, and latent dimensionality $d=5$ for the parameters of the interaction term. Corresponding Q-Q plots for some of the models are plotted in Figure~\ref{qu_icl}. Overall, the table and plots demonstrate that correctly modelling the arrival times requires inclusion within the model of an interaction term with a self-exciting component. Because of the extremely bursty behaviour of NetFlow arrival times, the Markov process model for main effects and interactions intuitively appears to be a suitable choice. 

\begingroup

\begin{table}[t]
\centering
\caption{KS scores on the ICL NetFlow data for different configurations of the MEG model.}
\label{ks_icl}
\scalebox{0.77}{
\begin{tabular}{c c | c c c c c}
\toprule
\multicolumn{2}{c|}{\textbf{KS scores (train \& test)}}  & \multicolumn{4}{c}{Main effects $\alpha_i(\cdot)$ and $\beta_j(\cdot)$ $\downarrow$} \\
\midrule
\multicolumn{2}{c|}{Interactions $\gamma_{ij}(\cdot)$ $\downarrow$} & Absent & Poisson ($r=0)$ & Markov ($r=1$) & Hawkes ($r=\infty$) \\
\midrule

\multicolumn{2}{c|}{Absent} &  -- \hspace{.8cm} -- & 0.7351 0.7148 & 0.6678 0.6489 & 0.7312 0.6950 \\
\midrule

\multirow{3}{*}{\parbox{1.5cm}{\centering Poisson \\ ($r=0$)}} & $d=1$ & 0.7328 0.7157 & 0.7325 0.7150 & 0.6672 0.6480 & 0.7316 0.6960 \\
& $d=5$ & 0.7295 0.7167 & 0.7313 0.7123 & 0.6673 0.6487 & 0.7275 0.6967 \\
& $d=10$ & 0.7260 0.7174 & 0.7289 0.7140 & 0.6680 0.6493 & 0.7270 0.6969 \\
\midrule

\multirow{3}{*}{\parbox{1.5cm}{\centering Markov \\ ($r=1$)}} & $d=1$ & 0.2194 0.1723 & 0.2242 0.1657 & 0.2038 0.1440 & 0.1645 0.1281 \\
& $d=5$ & 0.1024 0.1080 & 0.0896 0.0805 & \textbf{0.0728 0.0738} & 0.1041 0.0899 \\
& $d=10$ & 0.0843 0.0764 & 0.0871 0.0761 & 0.0850 0.0843 & 0.1100 0.0883 \\
\midrule

\multirow{3}{*}{\parbox{1.5cm}{\centering Hawkes \\ ($r=\infty$)}} & $d=1$ & 0.1080 0.0802 & 0.0747 0.1182 & 0.1082 0.0794 & 0.0884 0.1262 \\
& $d=5$ & 0.1576 0.1819 & 0.1532 0.2126 & 0.1677 0.2143 & 0.2307 0.2383 \\
& $d=10$ & 0.1584 0.1935 & 0.1546 0.2112 & 0.1619 0.2206 & 0.2388 0.2503 \\

\bottomrule
\end{tabular}
}
\end{table}
\endgroup

\begin{figure}[!t]
\centering
\begin{subfigure}[t]{.495\textwidth}
\centering
\caption{Training set}
\includegraphics[width=\textwidth]{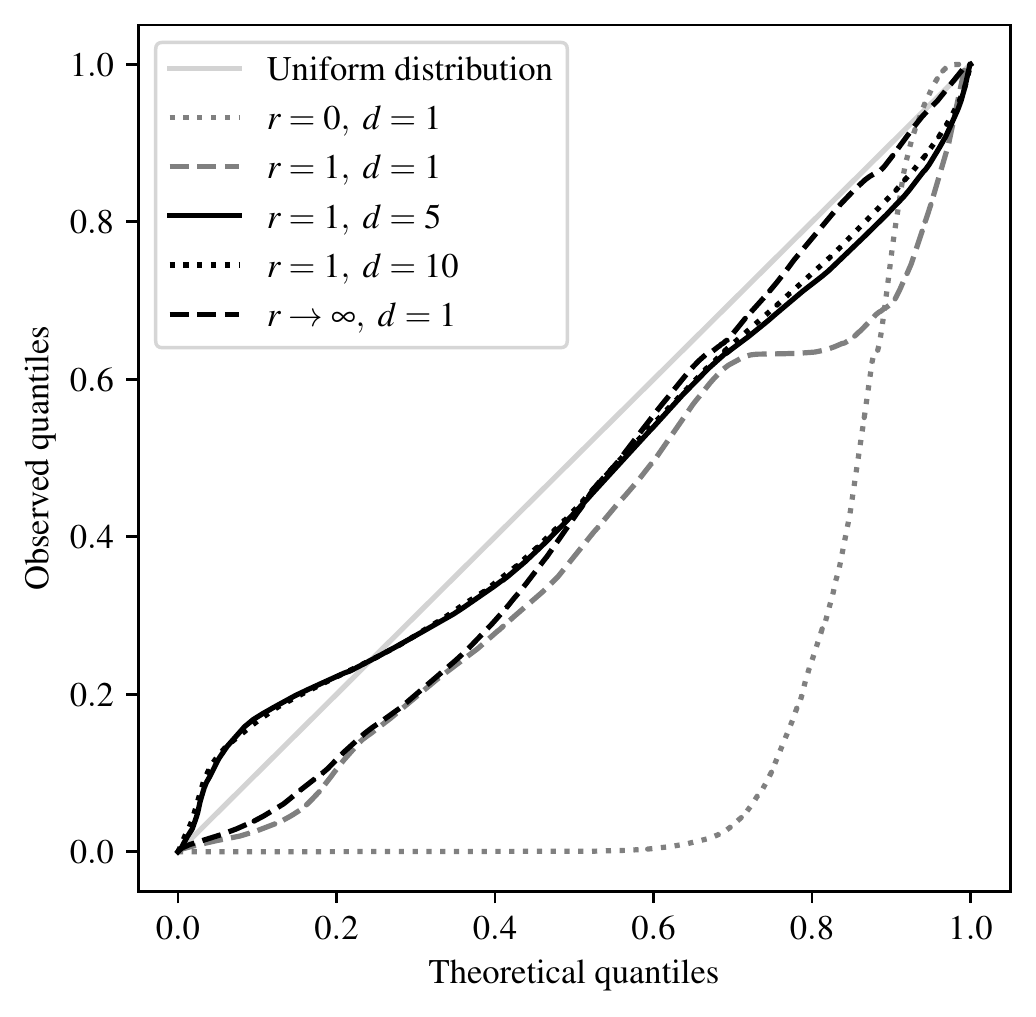}
\end{subfigure}
\begin{subfigure}[t]{.495\textwidth}
\centering
\caption{Test set}
\includegraphics[width=\textwidth]{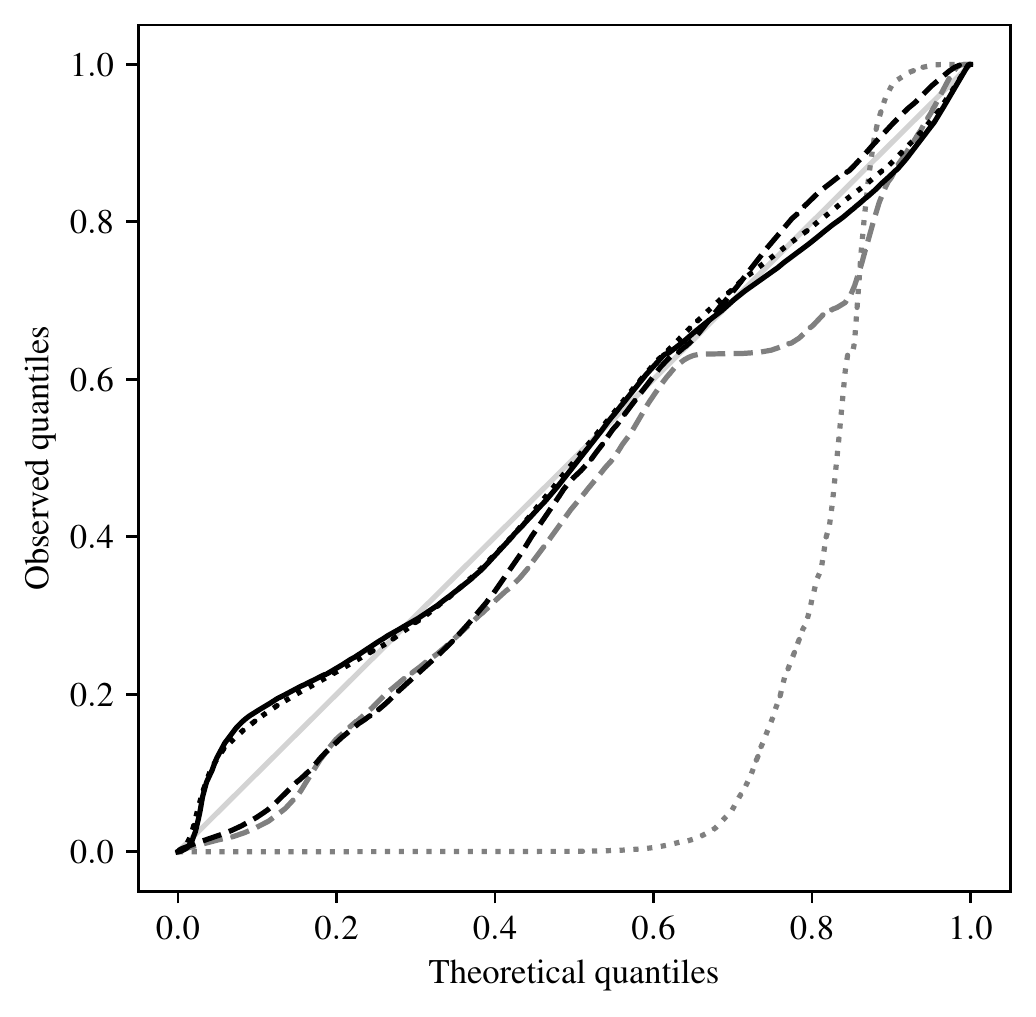}
\end{subfigure}
\caption{Q-Q plots for the training and test $p$-values obtained from different MEG models, with main effects $\alpha_i(t)$ and $\beta_j(t)$ with $r=1$, and different parameters for the interaction term $\gamma_{ij}(t)$, specified in the legend.}
\label{qu_icl}
\end{figure}

Finally, for the best performing model the corresponding KS scores are calculated individually for each edge, and plotted in Figure~\ref{ks_edge_icl} as a function of the number of connections on the edge. Clearly, the model has a better performance at scoring arrival times on more active edges. 

\begin{figure}[!t]
\centering
\begin{subfigure}[t]{.495\textwidth}
\centering
\caption{Training set}
\includegraphics[width=\textwidth]{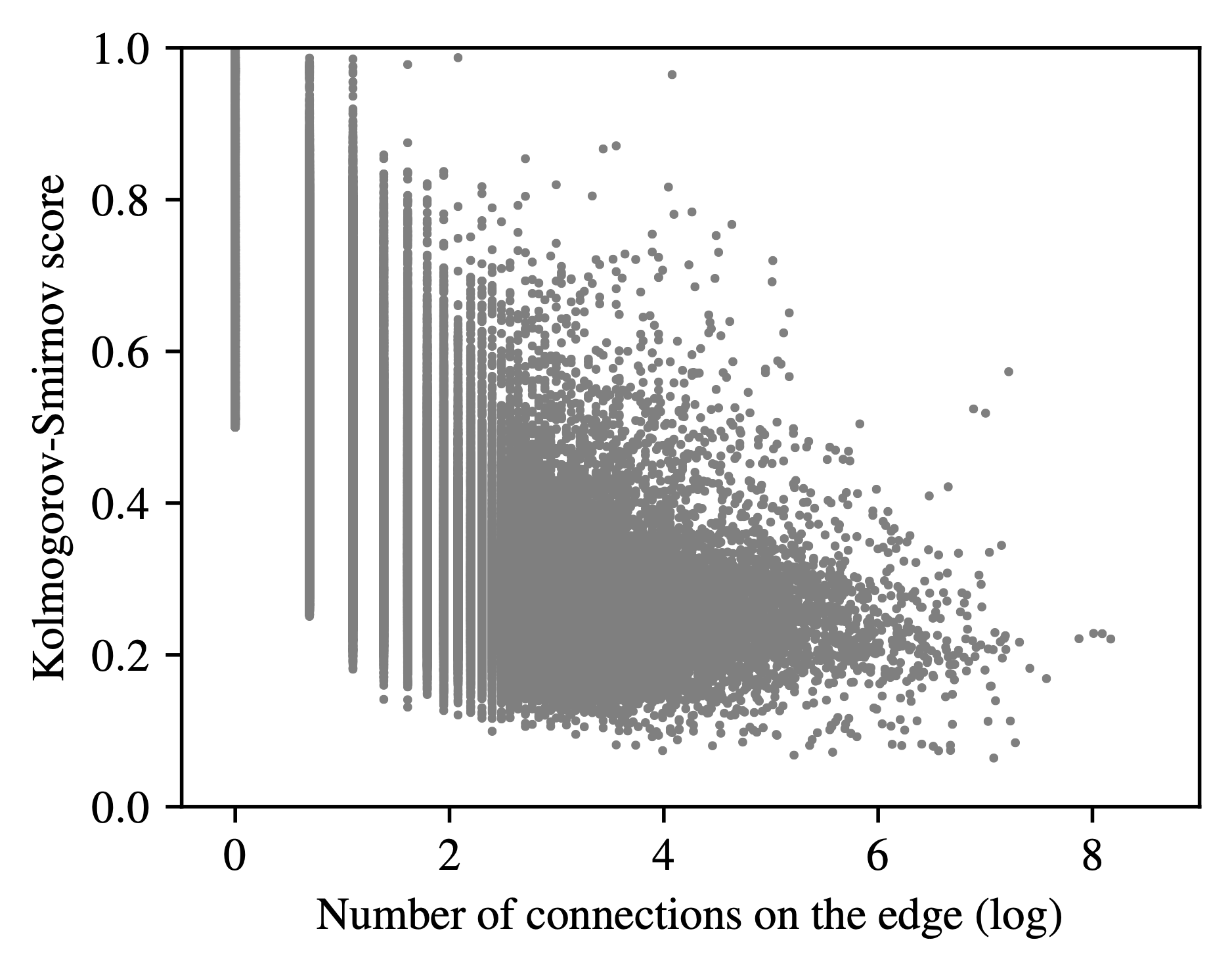}
\end{subfigure}
\begin{subfigure}[t]{.495\textwidth}
\centering
\caption{Test set}
\includegraphics[width=\textwidth]{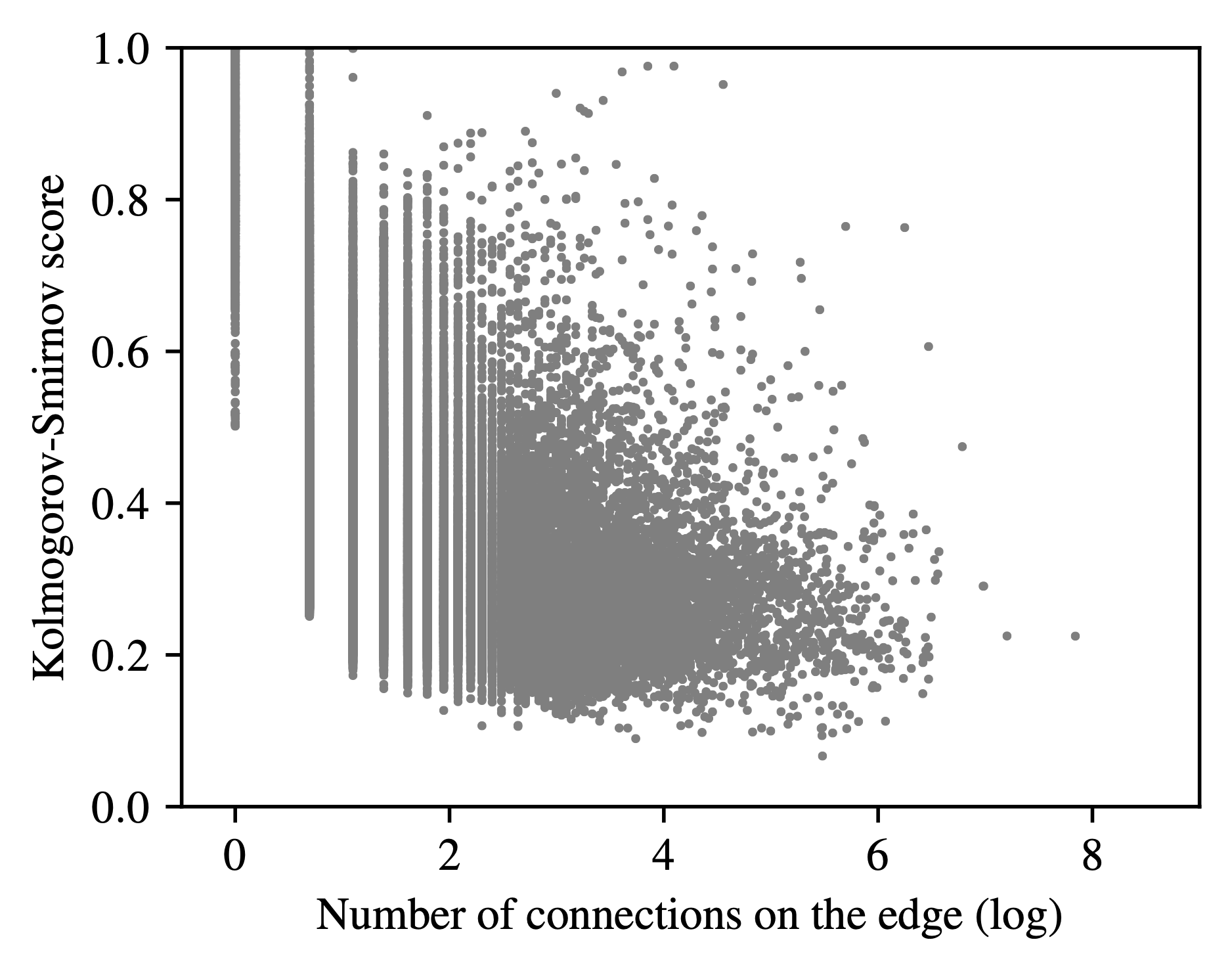}
\end{subfigure}
\caption{Scatterplot of the KS scores, calculated for each edge, versus the logarithm of the total number of connections on the edge, for the best performing model in Table~\ref{ks_icl}.}
\label{ks_edge_icl}
\end{figure}

\section{Conclusion} \label{sec:conclusion}

The mutually-exciting graph (MEG), a novel network-wide model for point processes with dyadic marks has been proposed. MEG uses mutually exciting point processes to model intensity functions, and borrows ideas from latent space models to infer relationships between the nodes. Edge-specific intensities are obtained only via node-specific parameters, which is useful for large and sparse graphs. Importantly, the proposed model is able to predict events observed on \textit{new} edges. 
Inference is performed via maximum likelihood estimation, optimised using the EM algorithm, or numerically via modern gradient ascent methods.
The model has been tested on simulated data and on two data sources related to computer networking: ICL NetFlow data and the Enron e-mail network. MEG appears to have excellent goodness-of-fit on training and testing data, resulting in low Kolmogorov-Smirnov scores even on very large and heterogeneous data like network flows. Furthermore, for the Enron e-mail network, MEG greatly outperforms results previously obtained in the literature on the same data.  
The model has been specifically motivated by cyber-security applications, where scoring observations on new links is particularly important for network security. Within this context, MEG might be used to complement existing techniques for modelling sequences of edges on dynamic networks \citep{SannaPassino19}, providing a network-wide method for scoring arrival times. 

The model could potentially be extended to admit an increasing number of nodes. Node-specific parameter values could be assigned after clustering similar nodes into groups, and allocating new nodes to one such community. For example, the initial parameter values for new nodes could correspond to the centroid of the corresponding community-specific parameter values. The initial cluster structure could be established from the connectivity patterns in the adjacency matrix (via spectral clustering or modularity maximisation), or from additional labels available for the nodes (for example, in cyber-security applications, subnets, geographical location, or machine type).  

\section*{Code}

A \textit{python} library to reproduce the results in this paper, and a \textit{bash} script to obtain the Enron e-mail network data, are available in the \textit{GitHub} repository \texttt{fraspass/meg}.

\section*{Acknowledgements}

This work is funded by the Microsoft Security AI research grant \textit{``Understanding the enterprise: Host-based event prediction for automatic defence in cyber-security"}. The authors thank Dr Melissa J. M. Turcotte for helpful discussions and comments.


\bibliographystyle{rss}
\singlespacing
\bibliography{biblio}

\appendix

\section{Calculation of the log-likelihood in MEG models} \label{likelihood_examples}


According to the choice of the excitation function and the parameter $r$, the log-likelihood \eqref{loglik} takes different forms. Here, examples are given for MEG models with $d=1$, $r=1$ or $r=\infty$, with scaled exponential excitation functions. 
For simplicity, since $d=1$, the second subscript $\ell$ is dropped from the triplets $(\gamma_{i\ell},\nu_{i\ell},\theta_{i\ell})$ and $(\gamma_{j\ell}^\prime,\nu_{j\ell}^\prime,\theta_{j\ell}^\prime)$.
Assume two sequences of arrival times $t_1<\cdots<t_{n_i}$ involving $i$ as source node, and $t_1^\prime<\cdots<t_{n_j^\prime}^\prime$ such that $j$ is the destination of the connection. Within the two sequences of arrival times $t_1,\dots,t_{n_i}$ and $t_1^\prime,\dots,t_{n_j^\prime}^\prime$, assume that a subset of $n_{ij}$ events, with $n_{ij}\leq\min\{n_i,n_j^\prime\}$, is observed on the edge $(i,j)$. Denote the indices of such events as $\ell_1,\dots,\ell_{n_{ij}}$ and $\ell_1^\prime,\dots,\ell_{n_{ij}}^\prime$, such that $t_{\ell_k}=t^\prime_{\ell_k^\prime}$. Define $\bar t_k=\max\{t_h:t_h<t_{\ell_k}\}$ and $\bar t_k^\prime=\max\{t_h^\prime:t_h^\prime<t_{\ell_k^\prime}^\prime\}$. For the edge $(i,j)$, assuming $r=1$, 
the first part of the log-likelihood \eqref{loglik} is:
\begin{align}
\sum_{k=1}^{n_{ij}} \log\lambda_{ij}(t_{\ell_k}) = \sum_{k=1}^{n_{ij}} \log\Big\{ & \alpha_i + \mu_i e^{-(\mu_i+\phi_i)(t_{\ell_k}-\bar t_k)} + \beta_j + \mu_j^\prime e^{-( \mu_j^\prime+\phi_j^\prime)(t_{\ell_k^\prime}^\prime-\bar t_k^\prime)} \\ & + \gamma_i\gamma_j^\prime + \nu_i\nu_j^\prime e^{-(\nu_i+\theta_i)(\nu_j^\prime+\theta_j^\prime)(t_{\ell_k}-t_{\ell_{k-1}})} \Big\}. \label{megw_lik}
\end{align}
For Hawkes process models, the main computational burden associated with the calculation of the likelihood is the double summation arising in the first term of \eqref{loglik} when $r=\infty$. The term in the first sum 
in \eqref{loglik} can be written as:
\begin{align}
\sum_{k=1}^{n_{ij}} \log\lambda_{ij}(t_{\ell_k})  = \sum_{k=1}^{n_{ij}} 
\log\bigg\{ & \alpha_i + \mu_i\sum_{h=1}^{\ell_k-1} e^{-(\mu_i+\phi_i)(t_{\ell_k}-t_h)} + \beta_j + \mu_j^\prime\sum_{h=1}^{\ell_k^\prime-1} e^{-(\mu_j^\prime+\phi_j^\prime)(t_{\ell_k^\prime}^\prime-t_h^\prime)} \\ & + \gamma_i\gamma_j^\prime +\nu_i\nu_j^\prime\sum_{h=1}^{k-1} e^{-(\nu_i+\theta_i)(\theta_j^\prime+\nu_j^\prime)(t_{\ell_k}-t_{\ell_h})}\bigg\}. \label{mega_likk}
\end{align}
Using a technique similar to the method proposed in \cite{Ogata78}, it is possible to calculate \eqref{mega_likk} in linear time using a recursive formulation of the inner summations. For $k\in\{2,\dots,n_{ij}\}$, define $\psi_{ij}(k)$, $\psi_{ij}^\prime(k)$ and $\tilde\psi_{ij}(t)$ as follows:
\begin{gather}
\psi_{ij}(k) = \sum_{h=1}^{\ell_k-1} e^{-(\mu_i+\phi_i)(t_{\ell_k}-t_h)}, \hspace{2cm} \psi_{ij}^\prime(k) = \sum_{h=1}^{\ell_k^\prime-1} e^{-(\mu_j^\prime+\phi_j^\prime)(t_{\ell_k^\prime}^\prime-t_h^\prime)}, \\ \tilde\psi_{ij}(k)=\sum_{h=1}^{k-1} e^{-(\nu_i+\theta_i)(\nu_j^\prime+\theta_j^\prime)(t_{\ell_k}-t_{\ell_h})}, \label{psi_def2}
\end{gather}
assuming the initial conditions $\tilde\psi_{ij}(1)=0$ and
\begin{align}
\psi_{ij}(1)=\sum_{h=1}^{\ell_1-1} e^{-(\mu_i+\phi_i)(t_{\ell_1}-t_h)}, && \psi_{ij}^\prime(1)=\sum_{h=1}^{\ell^\prime_1-1} e^{-(\mu_j^\prime+\phi_j^\prime)(t^\prime_{\ell^\prime_1}-t_h^\prime)}.
\end{align}
Note that the subscript $(i,j)$ for $\psi_{ij}(k)$ and $\psi_{ij}^\prime(k)$ is required since $\ell_k$ and $\ell_k^\prime$ are edge-specific values, and represent a short hand notation for $\ell_{ijk}$ and $\ell_{ijk}^\prime$. Using \eqref{psi_def}, the first term \eqref{mega_likk} of the likelihood becomes:
\begin{equation}
\sum_{k=1}^{n_{ij}} \log\lambda_{ij}(t_{\ell_k}) = \sum_{k=1}^{n_{ij}} \log\left\{\alpha_i+\beta_j+\gamma_i\gamma_j^\prime + \mu_i\psi_{ij}(k) + \mu_j^\prime\psi_{ij}^\prime(k) + \nu_i\nu_j^\prime\tilde\psi_{ij}(k) \right\}. \label{mega_lik2}
\end{equation}


\begin{proposition}
The terms $\psi_{ij}(k), \psi_{ij}^\prime(k)$ and $\tilde\psi_{ij}(k)$ can be written recursively as follows:
\begin{align}
& \psi_{ij}(k) = e^{-(\mu_i+\phi_i)(t_{\ell_k}-t_{\ell_{k-1}})}\left[1+\psi_{ij}(k-1)\right] + \sum_{h=\ell_{k-1}+1}^{\ell_k-1} e^{-(\mu_i+\phi_i)(t_{\ell_k}-t_h)}, \label{psi2}\\
& \psi_{ij}^\prime(k) = e^{-(\mu_j^\prime+\phi_j^\prime)(t^\prime_{\ell_k^\prime}-t^\prime_{\ell_{k-1}^\prime})}\left[1+\psi_{ij}^\prime(k-1)\right] + \sum_{h=\ell_{k-1}^\prime+1}^{\ell_k^\prime-1} e^{-(\mu_j^\prime+\phi_j^\prime)(t_{\ell_k^\prime}^\prime-t_h^\prime)},\\ 
& \tilde\psi_{ij}(k) = e^{-(\nu_i+\theta_i)(\nu_j^\prime+\theta_j^\prime)(t_{\ell_k}-t_{\ell_{k-1}})}\left[1+\tilde\psi_{ij}(k-1)\right].
\label{psi_prime2}
\end{align}
\end{proposition}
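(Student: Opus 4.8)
The plan is to verify each recursion directly from the closed-form definitions in \eqref{psi_def2}, exploiting the fact that the exponents are linear in the time differences and therefore factorise additively. Abbreviating the decay rate as $a = \mu_i + \phi_i$, the crucial observation for $\psi_{ij}(k)$ is the telescoping identity $t_{\ell_k} - t_h = (t_{\ell_k} - t_{\ell_{k-1}}) + (t_{\ell_{k-1}} - t_h)$, which lets every summand split as $e^{-a(t_{\ell_k}-t_h)} = e^{-a(t_{\ell_k}-t_{\ell_{k-1}})}\,e^{-a(t_{\ell_{k-1}}-t_h)}$. No induction is needed: each recursion is a one-step algebraic identity relating $\psi_{ij}(k)$ to $\psi_{ij}(k-1)$.

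First I would partition the defining sum $\psi_{ij}(k) = \sum_{h=1}^{\ell_k-1} e^{-a(t_{\ell_k}-t_h)}$ at the index $\ell_{k-1}$ into a head $\sum_{h=1}^{\ell_{k-1}}$ and a residual tail $\sum_{h=\ell_{k-1}+1}^{\ell_k-1}$, which is legitimate because the edge arrivals form a strictly increasing subsequence, so $\ell_{k-1} < \ell_k$. In the head I would pull out the common factor $e^{-a(t_{\ell_k}-t_{\ell_{k-1}})}$, isolate the $h = \ell_{k-1}$ term, which equals $1$ since its exponent vanishes, and recognise the remainder $\sum_{h=1}^{\ell_{k-1}-1} e^{-a(t_{\ell_{k-1}}-t_h)}$ as precisely $\psi_{ij}(k-1)$. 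This reconstructs the bracket $[1 + \psi_{ij}(k-1)]$ of \eqref{psi2}, while the residual tail, left unaltered, supplies the second summand. The derivation for $\psi_{ij}^\prime(k)$ is verbatim after the substitutions $a \mapsto \mu_j^\prime + \phi_j^\prime$, $t_h \mapsto t_h^\prime$ and $\ell_k \mapsto \ell_k^\prime$.

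For $\tilde\psi_{ij}(k)$ the same telescoping applies with $c = (\nu_i+\theta_i)(\nu_j^\prime+\theta_j^\prime)$, and the argument is in fact shorter: the defining sum runs over $h = 1,\dots,k-1$ and the times involved are exactly the edge-$(i,j)$ arrival times $t_{\ell_h}$, so there are no intervening indices and hence no residual tail. Factoring $e^{-c(t_{\ell_k}-t_{\ell_{k-1}})}$, isolating the $h = k-1$ term, again equal to $1$, and identifying the rest as $\tilde\psi_{ij}(k-1)$ yields \eqref{psi_prime2} at once.

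The only genuinely delicate point is this structural asymmetry between the two cases, and it is where I would concentrate the exposition. For $\psi_{ij}$ and $\psi_{ij}^\prime$ the index $h$ sweeps the full source (respectively destination) sequence up to $\ell_k - 1$, so the events falling strictly between the $(k-1)$-th and $k$-th edge arrivals are \emph{not} absorbed into $\psi_{ij}(k-1)$ and must be collected separately in the residual sum; for $\tilde\psi_{ij}$ the index sweeps only edge events and no such gap occurs. Getting the boundary indices $\ell_{k-1}$ and $\ell_{k-1}+1$ right when splitting the sum is therefore the thing to watch, but the manipulation is otherwise routine, and the recursions are closed off by the initial conditions stated immediately before the proposition.
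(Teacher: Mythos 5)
Your proof is correct and follows essentially the same route as the paper's: splitting the defining sum at $\ell_{k-1}$, factoring out $e^{-(\mu_i+\phi_i)(t_{\ell_k}-t_{\ell_{k-1}})}$ (your telescoping of the exponent is exactly the paper's multiply-and-divide step), peeling off the $h=\ell_{k-1}$ term as the $1$ in the bracket, and leaving the intervening source/destination events as the residual tail, which correctly vanishes for $\tilde\psi_{ij}(k)$. Your explicit remark on the structural asymmetry between $\psi_{ij},\psi_{ij}^\prime$ and $\tilde\psi_{ij}$ is a nice touch the paper only states in passing, but the argument itself is identical.
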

\begin{proof}
The result is proved here for $\psi_{ij}(k)$. 
\begin{align}
\psi_{ij}(k) &= \sum_{h=1}^{\ell_k-1} e^{-(\mu_i+\phi_i)(t_{\ell_k}-t_h)} = \sum_{h=1}^{\ell_{k-1}} e^{-(\mu_i+\phi_i)(t_{\ell_k}-t_h)} + \sum_{h=\ell_{k-1}+1}^{\ell_k-1} e^{-(\mu_i+\phi_i)(t_{\ell_k}-t_h)} \\
		&= e^{-(\mu_i+\phi_i)(t_{\ell_k}-t_{\ell_{k-1}})+(\mu_i+\phi_i)(t_{\ell_k}-t_{\ell_{k-1}})} \sum_{h=1}^{\ell_{k-1}} e^{-(\mu_i+\phi_i)(t_{\ell_k}-t_h)} + \sum_{h=\ell_{k-1}+1}^{\ell_k-1} e^{-(\mu_i+\phi_i)(t_{\ell_k}-t_h)} \\
		&= e^{-(\mu_i+\phi_i)(t_{\ell_k}-t_{\ell_{k-1}})}\sum_{h=1}^{\ell_{k-1}} e^{-(\mu_i+\phi_i)(t_{\ell_{k-1}}-t_h)} + \sum_{h=\ell_{k-1}+1}^{\ell_k-1} e^{-(\mu_i+\phi_i)(t_{\ell_k}-t_h)} \\
		&= e^{-(\mu_i+\phi_i)(t_{\ell_k}-t_{\ell_{k-1}})}\left[1+\sum_{h=1}^{\ell_{k-1}-1} e^{-(\mu_i+\phi_i)(t_{\ell_{k-1}}-t_h)}\right] + \sum_{h=\ell_{k-1}+1}^{\ell_k-1} e^{-(\mu_i+\phi_i)(t_{\ell_k}-t_h)},
\end{align}
which, using the definition of $\psi_{ij}(k-1)$, gives the result \eqref{psi2}. 
The proof for $\psi_{ij}^\prime(k)$ follows similar steps, but the summation is splitted at $\ell_{k-1}^\prime$, and the first summation is multiplied and divided by $\exp\{-(\mu_j^\prime+\phi_j^\prime)(t_{\ell_k^\prime}^\prime-t_{\ell_{k-1}^\prime}^\prime)\}$. For $\tilde\psi_{ij}(k)$, the decomposition is analogous to standard Hawkes processes. 
\end{proof}
The second part of the log-likelihood \eqref{loglik} is equivalent to $\Lambda_{ij}(T)$ and follows from integration of $\lambda_{ij}(t)$ over the observation period. For $r=1$: 
\begin{align}
&\int_0^T \lambda_{ij}(t)\mathrm dt =(\alpha_i+\beta_j+\gamma_i\gamma_j^\prime)
(T-\min\{T,\tau_{ij}\}) \\
& - \frac{\nu_i\nu_j^\prime}{(\nu_i+\theta_i)(\nu_j^\prime+\theta_j^\prime)}\sum_{k=1}^{n_{ij}}\left[e^{-(\nu_i+\theta_i)(\nu_j^\prime+\theta_j^\prime)(t_{\ell_{k+1}}-t_{\ell_k})}-1\right] \notag \\ & - \frac{\mu_i}{\mu_i+\phi_i}\sum_{k=1}^{n_i}\mathds 1_{[\tau_{ij},\infty)}(t_k)\left[e^{-(\mu_i+\phi_i)(t_{k+1}-t_k)}-1\right] \\
 & - \frac{\mu_j^\prime}{\mu_j^\prime+\phi_j^\prime}\sum_{k=1}^{n_j^\prime}\mathds 1_{[\tau_{ij},\infty)}(t_k^\prime)\left[e^{-(\mu_j^\prime+\phi_j^\prime)(t_{k+1}^\prime-t_k^\prime)}-1\right] \notag  \\ 
& - \frac{\mu_i}{\mu_i+\phi_i} \left[e^{-(\mu_i+\phi_i)(\min\{t_h:t_h\geq\tau_{ij}\}-\max\{t_h:t_h\leq\tau_{ij}\})}-e^{-(\mu_i+\phi_i)(\tau_{ij}-\max\{t_h:t_h\leq\tau_{ij}\})}\right] \notag \\ 
& - \frac{\mu_j^\prime}{\mu_j^\prime+\phi_j^\prime} \left[e^{-(\mu_j^\prime+\phi_j^\prime)(\min\{t_h^\prime:t_h^\prime\geq\tau_{ij}\}-\max\{t_h^\prime:t_h^\prime\leq\tau_{ij}\})}-e^{-(\mu_j^\prime+\phi_j^\prime)(\tau_{ij}-\max\{t_h^\prime:t_h^\prime\leq\tau_{ij}\})}\right], \label{comp_r1}
\end{align}
where $t_{n_i+1}=t_{n_j^\prime+1}^\prime=t_{\ell_{n_{ij}+1}}=T$.
\begingroup
\allowdisplaybreaks
Similarly, for $r=\infty$:  
\begin{align}
\int_0^T \lambda_{ij}(t)\mathrm dt = &\ (\alpha_i+\beta_j+\gamma_i\gamma_j^\prime)
(T-\min\{T,\tau_{ij}\}) \label{comp_rinf} \\
& - \frac{\mu_i}{\mu_i+\phi_i}\sum_{k=1}^{n_i}\left[e^{-(\mu_i+\phi_i)(T-t_k)}
- e^{-(\mu_i+\phi_i)\min\{T,\max\{\tau_{ij}-t_k,0\}\}}\right]  \\
& - \frac{\mu_j^\prime}{\mu_j^\prime+\phi_j^\prime}\sum_{k=1}^{n_j^\prime}\left[e^{-(\mu_j^\prime+\phi_j^\prime)(T-t_k^\prime)}
- e^{-(\mu_i+\phi_i)\min\{T,\max\{\tau_{ij}-t_k^\prime,0\}\}}\right] \\
&	- \frac{\nu_i\nu_j^\prime}{(\nu_i+\theta_i)(\nu_j^\prime+\theta_j^\prime)}\sum_{k=1}^{n_{ij}}\left[e^{-(\nu_i+\theta_i)(\nu_j^\prime+\theta_j^\prime)(T-t_{\ell_k})}-1\right]. \notag
\end{align}
\endgroup

Note that \eqref{comp_r1} and \eqref{comp_rinf} are monotonically decreasing functions in $\tau_{ij}$ for any choice of the remaining parameters, with constraint $\tau_{ij}<t_{\ell_{ij1}}$, where $t_{\ell_{ij1}}$ is the first arrival time on the edge $(i,j)$. Furthermore, $\tau_{ij}$ does not explicitly appear in the first part of the likelihood, \textit{cf.} \eqref{megw_lik} and \eqref{mega_likk}. Therefore, using \eqref{loglik}, the maximum likelihood estimate for $\tau_{ij}$ is simply $\hat\tau_{ij}=t_{\ell_{ij1}}$ if at least one event is observed on the edge, and $\hat\tau_{ij}=\infty$ otherwise. If $\tau_{ij}$ is set to its MLE, then the last two lines of \eqref{comp_r1} cancel out. 

For the $p$-values in \eqref{pval}, for $r=\infty$ and $\tau_{ij}=0$, the difference between the compensators 
is calculated sequentially at the observed times $t_1,\dots,t_{n_{ij}}$ using $\psi_{ij}(k), \psi_{ij}^\prime(k)$ and $\tilde\psi_{ij}(k)$:
\begin{align}
 \Lambda_{ij}(t_k) - \Lambda_{ij}(t_{k-1}) =\ & (\alpha_i+\beta_j+\gamma_i\gamma_j^\prime)(t_k-t_{k-1}) \\
& - \frac{\mu_i}{\mu_i+\phi_i}\left[\psi_{ij}(k)-N_i(t_k)-\psi_{ij}(k-1)+N_i(t_{k-1})\right] \\ 
& - \frac{\mu_j^\prime}{\mu_j^\prime+\phi_j^\prime}\left[\psi^\prime_{ij}(k)-N_j^\prime(t_k)-\psi^\prime_{ij}(k-1)+N_j^\prime(t_{k-1})\right] \\ 
	& - \frac{\nu_i\nu_j^\prime}{(\nu_i+\theta_i)(\nu_j^\prime+\theta_j^\prime)}\left[\tilde\psi_{ij}(k)-N_{ij}(t_k)-\tilde\psi_{ij}(k-1)+N_{ij}(t_{k-1})\right]. \notag
\end{align}
An expression similar to \eqref{comp_r1} can be used for 
$\Lambda_{ij}(t_k) - \Lambda_{ij}(t_{k-1})$ when $r=1$.
 
\section{Calculation of the gradient for $r=\infty$} \label{gradient_example}

In the derivations of the gradient, the following notation is used:
\begin{equation}
\chi_{ij}(k) = \alpha_i+\beta_j+\gamma_i\gamma_j^\prime + \mu_i\psi_{ij}(k) + \mu_j^\prime\psi_{ij}^\prime(k) + \nu_i\nu_j^\prime\tilde\psi_{ij}(k).
\end{equation}
The partial derivative of $\lolik$ with respect to $\alpha_i$ and $\gamma_i$ takes the following form: 
\begin{align}
&\der{\alpha_i}\lolik = \sum_{j=1}^n
\mathds 1_{[\tau_{ij},\infty)}(T) \left[-(T-\min\{T,\tau_{ij}\}) + \sum_{k=1}^{n_{ij}} \chi_{ij}(k)^{-1}\right], \\
&\der{\gamma_i}\lolik = \sum_{j=1}^n  
\gamma_j^\prime \mathds 1_{[\tau_{ij},\infty)}(T) \left[-(T-\min\{T,\tau_{ij}\}) + \sum_{k=1}^{n_{ij}} \chi_{ij}(k)^{-1}\right]. 
\end{align}
Similar equations can be derived for the partial derivatives with respect to $\beta_j$ and $\gamma_j^\prime$.

The calculations of the partial derivatives with respect to the parameters $\mu_i$, $\phi_i$, $\mu_j^\prime$ and $\phi_j^\prime$ use the recursive structure defined in the previous section, since the expressions $\psi_{ij}(k)$ and $\psi_{ij}^\prime(k)$ are functions of $(\mu_i,\phi_i)$ and $(\mu_j^\prime,\phi_j^\prime)$ respectively. For $\mu_i$ and $\phi_i$:
\begin{align}
\der{\mu_i}\lolik =&\ \frac{1}{\mu_i+\phi_i}\sum_{j=1}^n 
\mathds 1_{[\tau_{ij},\infty)}(T)
\sum_{k=1}^{n_i}\bigg\{
\frac{\phi_i}{\mu_i+\phi_i}\left[e^{-(\mu_i+\phi_i)(T-t_k)}-e^{-(\mu_i+\phi_i)\tau_{ijk}^+}\right] \notag \\ &
- \mu_i\left[(T-t_k)e^{-(\mu_i+\phi_i)(T-t_k)}-\tau_{ijk}^+e^{-(\mu_i+\phi_i)\tau_{ijk}^+}\right]
\bigg\} \notag \\ &
+ \sum_{j=1}^n
\mathds 1_{[\tau_{ij},\infty)}(T)
\sum_{k=1}^{n_{ij}} \frac{1}{\chi_{ij}(k)}\left[\psi_{ij}(k) + \mu_i\der{\mu_i}\psi_{ij}(k)\right], \notag \\
\der{\phi_i}\lolik =&\ - \frac{\mu_i}{\mu_i+\phi_i}\sum_{j=1}^n 
\mathds 1_{[\tau_{ij},\infty)}(T)
\sum_{k=1}^{n_i}\bigg\{ \frac{1}{\mu_i+\phi_i}\left[e^{-(\mu_i+\phi_i)(T-t_k)}-e^{-(\mu_i+\phi_i)\tau_{ijk}^+}\right] \notag \\ & 
+ \left[(T-t_k)e^{-(\mu_i+\phi_i)(T-t_k)}-\tau_{ijk}^+e^{-(\mu_i+\phi_i)\tau_{ijk}^+}\right]
\bigg\} \\ 
& + \mu_i\sum_{j=1}^n 
\mathds 1_{[\tau_{ij},\infty)}(T)
\sum_{k=1}^{n_{ij}} \frac{1}{\chi_{ij}(k)}\der{\phi_i}\psi_{ij}(k), \label{grad_phi}
\end{align}
where $\tau_{ijk}^+=\min\{T,\max\{\tau_{ij}-t_k,0\}\}$.
In the above expression, the partial derivative of $\psi_{ij}(k)$ with respect to $\mu_i$ and $\phi_i$ is computed recursively as follows:
\begin{align}
\der{\mu_i}\psi_{ij}(k) = \der{\phi_i}\psi_{ij}(k) = &\ e^{-(\mu_i+\phi_i)(t_{\ell_k}-t_{\ell_{k-1}})}\left\{\der{\phi_i}\psi_{ij}(k-1) -(t_{\ell_k}-t_{\ell_{k-1}})\left[1+\psi_{ij}(k-1)\right]\right\}  \notag \\ 
& - \sum_{h=\ell_{k-1}+1}^{\ell_k-1} (t_{\ell_k}-t_h)e^{-(\mu_i+\phi_i)(t_{\ell_k}-t_h)},
\end{align}
Similar considerations can be made for the partial derivatives with respect to 
$(\nu_i,\theta_i)$ and $(\nu_j^\prime,\theta_j^\prime)$. 
In this case, the recursive form stems from $\tilde\psi_{ij}(k)$, which is function of the two pairs of parameters. For $\nu_i$ and $\theta_i$:
\begin{align}
\der{\nu_i}&\lolik = \sum_{j=1}^n \frac{
\mathds 1_{[\tau_{ij},\infty)}(T)
\nu_j^\prime}{\nu_i+\theta_i} \sum_{k=1}^{n_{ij}}\Bigg\{\frac{\theta_i}{(\nu_i+\theta_i)(\nu_j^\prime+\theta_j^\prime)}\left[e^{-(\nu_i+\theta_i)(\nu_j^\prime+\theta_j^\prime)(T-t_{\ell_k})}-1\right] \\ & -\nu_i(T-t_{\ell_k})e^{-(\nu_i+\theta_i)(\nu_j^\prime+\theta_j^\prime)(T-t_{\ell_k})}\Bigg\} + \sum_{j=1}^n \sum_{k=1}^{n_{ij}} \frac{
\mathds 1_{[\tau_{ij},\infty)}(T)
\nu_j^\prime }{\chi_{ij}(k)}\left[\tilde\psi_{ij}(k) + \nu_i\der{\nu_i}\tilde\psi_{ij}(k)\right], \notag \\
\der{\theta_i}&\lolik = -\frac{\nu_i}{\nu_i+\theta_i}\sum_{j=1}^n 
\mathds 1_{[\tau_{ij},\infty)}(T)
\nu_j^\prime\sum_{k=1}^{n_{ij}}\Bigg\{\frac{1}{(\nu_i+\theta_i)(\nu_j^\prime+\theta_j^\prime)}\left[e^{-(\nu_i+\theta_i)(\nu_j^\prime+\theta_j^\prime)(T-t_{\ell_k})}-1\right] \notag \\ 
&+(T-t_{\ell_k})e^{-(\nu_i+\theta_i)(\nu_j^\prime+\theta_j^\prime)(T-t_{\ell_k})}\Bigg\} + \nu_i\sum_{j=1}^n \sum_{k=1}^{n_{ij}} \frac{1}{\chi_{ij}(k)}\der{\theta_i}\tilde\psi_{ij}(k).
\end{align}
The recursive equations for the partial derivative of $\tilde\psi_{ij}(k)$ with respect to $\nu_i$ and $\theta_i$ are equivalent. For $\theta_i$:
\begin{align}
\der{\theta_i}\tilde\psi_{ij}(k) = &\ e^{-(\nu_i+\theta_i)(\nu_j^\prime+\theta_j^\prime)(t_{\ell_k}-t_{\ell_{k-1}})}\left\{\der{\theta_i}\tilde\psi_{ij}(k-1)-(\nu_j^\prime+\theta_j^\prime)(t_{\ell_k}-t_{\ell_{k-1}})\left[1+\tilde\psi_{ij}(k-1)\right]\right\}, \notag
\end{align}
Similarly to the previous cases, the initial condition is:
\begin{equation}
\der{\theta_i}\tilde\psi_{ij}(1) 
= 0.
\end{equation}

\end{document}